\definecolor{mygreen}{RGB}{20,160,60}
\definecolor{myred}{RGB}{150,0,0}
\renewcommand{\epsilon}{\varepsilon}
\newcommand{\stackeq}[2]{\ensuremath{\stackrel{\text{#1}}{#2}}}
\newcommand{\hiddencomment}[1]{}
\newcommand{\mc}[1]{\ensuremath{\mathcal{#1}}}
\crefname{claim}{claim}{claims}
\newtheorem{theorem}{Theorem}[section]
\newtheorem{lemma}[theorem]{Lemma}
\newtheorem{proposition}[theorem]{Proposition}
\newtheorem{corollary}[theorem]{Corollary}
\newtheorem{definition}[theorem]{Definition}
\newtheorem{claim}[theorem]{Claim}
\newtheorem{fact}[theorem]{Fact}
\newtheorem{observation}[theorem]{Observation}
\definecolor{mylightgray}{RGB}{230,230,230}
\algnewcommand{\IIf}[1]{\State\algorithmicif\ #1\ \algorithmicthen}
\algnewcommand{\EndIIf}{\unskip\ \algorithmicend\ \algorithmicif}
\newenvironment{graytbox}{
\par\addvspace{0.1cm}
\begin{tcolorbox}[width=\textwidth,
                  boxsep=5pt,
                  left=1pt,
                  right=1pt,
                  top=2pt,
                  bottom=2pt,
                  boxrule=0pt,
                  arc=0pt,
                  colback=mylightgray,
                  colframe=black,
                  ]
}{
\end{tcolorbox}
}
\newenvironment{highlighttechnical}{
\par\addvspace{0.1cm}
\begin{tcolorbox}[width=\textwidth,
                  boxsep=5pt,
                  left=1pt,
                  right=1pt,
                  top=2pt,
                  bottom=2pt,
                  boxrule=1pt,
                  arc=0pt,
                  colframe=black,
                  colback=white
                  ]
}{
\end{tcolorbox}
}
\newenvironment{myproof}{
\vspace{-0.5cm}
\begin{proof}
}{
\end{proof}
}
\renewcommand{\paragraph}{%
  \@startsection{paragraph}{4}%
  {\z@}{10pt}{-1em}%
  {\normalfont\normalsize\bfseries}%
}
\title{Improved Analysis of EDCS\\ via Gallai-Edmonds Decomposition}
\author{Soheil Behnezhad}
\date{}
\begin{document}

\maketitle

\begin{abstract}
In this note, we revisit the {\em edge-degree constrained subgraph} (EDCS) introduced by Bernstein and Stein (ICALP'15). An EDCS is a sparse subgraph satisfying simple edge-degree constraints that is guaranteed to include an (almost) $\frac{2}{3}$-approximate matching of the base graph. Since its introduction, the EDCS has been successfully applied to numerous models of computation. Motivated by this success, we revisit EDCS and present an improved bound for its key property in {\em general} graphs. 

\medskip
Our main result is a new proof of the approximation guarantee of EDCS that builds on the graph's Gallai-Edmonds decomposition, avoiding the probabilistic method of the previous proofs. As a result, we get that to obtain a $(\frac{2}{3} - \epsilon)$-approximation, a sparse EDCS with maximum degree bounded by $O(1/\epsilon)$ is sufficient. This improves the $O(\log(1/\epsilon)/\epsilon^2)$ bound of Assadi and Bernstein (SOSA'19) and the $O(1/\epsilon^3)$ bound of Bernstein and Stein (SODA'16). Our guarantee essentially matches what was previously only known for bipartite graphs, thereby removing the gap in our understanding of EDCS in general vs. bipartite graphs.
\end{abstract}

\thispagestyle{empty}
\setcounter{page}{0}

\clearpage
\section{Introduction}

\newcommand{\EDCS}[1]{\ensuremath{#1\text{\normalfont -EDCS}}}

Although a maximum matching can be found in polynomial time \cite{edmonds1965paths}, there are many natural settings and computational models in which an exact maximum matching is provably impossible to find. As a result, the {\em approximate} maximum matching problem has also received significant attention. For $0 \leq \alpha \leq 1$ a matching is said to provide an $\alpha$-approximation if its size is at least $\alpha$ fraction of the size of a maximum matching. While different ad hoc algorithms for approximate matching have been developed in various computational models, generic algorithmic tools that are applicable to a wider range of settings have been of special interest. The {\em edge-degree constrained subgraph} (EDCS), introduced by Bernstein and Stein \cite{BernsteinSteinICALP15}, is one of the most successful such methods. Since its introduction, the EDCS has found applications in dynamic graph algorithms, communication complexity, stochastic matching, fault tolerant matching, and random-order streaming algorithms 
\cite{BernsteinSteinICALP15,BernsteinSteinSODA16,AssadiBBMSSODA19,AssadiBernsteinSOSA19,BernsteinICALP20,AssadiB21} (see specifically the paper of \cite{AssadiBernsteinSOSA19} which discusses many of these applications). Motivated by its success and versatility, we revisit EDCS in this work and further study its {\em key property} in general graphs.

The EDCS is formally defined in the following simple and clean way:

\begin{definition}[\cite{BernsteinSteinICALP15}]
	Let $G=(V, E)$ be a graph and let $\beta > \beta^- \geq 1$ be integers. A subgraph $H=(V, E_H)$ of $G$ is a {\em $(\beta, \beta^-)$-EDCS} of $G$ if it satisfies the following two properties:
	\begin{enumerate}[label={\normalfont (P\arabic*)}, itemsep=0pt, topsep=5pt]
		\item For any edge $e = (u, v) \in E_H$, \,\,\,\,\,\,\,\,\,\,\,\,\,\,$\deg_H(u) + \deg_H(v) \leq \beta$.
		\item For any edge $e = (u, v) \in E \setminus E_H$, \,\,\,\,$\deg_H(u) + \deg_H(v) \geq \beta^-$.
	\end{enumerate}
\end{definition}

It is easy to verify that for $\beta = 2$ and $\beta^- = 1$, which are the smallest integers satisfying $\beta > \beta^- \geq 1$, subgraph $H$ is a \EDCS{(2, 1)} if and only if it is a {\em maximal matching} of $G$ (where recall that a matching is maximal if it is not a subset of another matching). As such, the EDCS can be seen as a natural generalization of maximal matchings. Somewhat surprisingly, however, by slightly increasing $\beta$ and $\beta^-$, the approximation achieved by EDCS gets much better than what maximal matchings achieve. Specifically, while a maximal matching provides only a $\frac{1}{2}$-approximation\footnote{Consider a path of length three with the middle edge forming a maximal matching.}, the key property of EDCS  asserts that for large enough $\beta$ and $\beta^-$ close enough to $\beta$, the EDCS includes an (almost) $\frac{2}{3}$-approximate maximum matching of $G$.

This key property of EDCS was first proved by Bernstein and Stein \cite{BernsteinSteinICALP15} for {\em bipartite} graphs, obtaining that $\beta \geq 1/\epsilon$ and $\beta^- \geq (1-\epsilon) \beta$ suffice for a $(\frac{2}{3} - O(\epsilon))$-approximation. Note that since all the edges in a \EDCS{(\beta, \beta^-)} have {\em edge-degree} bounded by $\beta$, all the vertices also have degree at most $\beta$. Therefore, a \EDCS{(\beta, \beta^-)} has at most $O(n \beta)$ edges. For this reason, the smaller values of $\beta$ are more desirable as they make the EDCS sparser.

The same approximation is also known to be achievable by EDCS in {\em general} graphs, albeit with a stronger assumption on $\beta$. Particularly, \cite{BernsteinSteinSODA16} proved via the probabilistic method that $\beta \geq 1/\epsilon^{3}$ and $\beta^- \geq (1-\epsilon) \beta$ suffice for a $(\frac{2}{3} - O(\epsilon))$-approximation in general graphs. By another proof based on the probabilistic method and the Lov\'asz Local Lemma,  Assadi and Bernstein \cite{AssadiBernsteinSOSA19} proved that $\beta \geq \log(1/\epsilon)/ \epsilon^{2}$ and $\beta^- \geq (1-\epsilon) \beta$ also suffice for a $(\frac{2}{3} - O(\epsilon))$-approximation, improving the cubic dependence of \cite{BernsteinSteinSODA16} on $1/\epsilon$ to near quadratic. While the analysis of \cite{AssadiBernsteinSOSA19} is particularly clean and simple, the near-quadratic dependence on $1/\epsilon$ is inherent to it, leaving a quadratic gap between our understanding of EDCS in general and bipartite graphs.

In this work, we close this gap between general and bipartite graphs, by improving the dependence of $\beta$ on $1/\epsilon$ in general graphs to linear. We achieve this via a new proof of the approximation guarantee that is based on the graph's Gallai-Edmonds decomposition and completely avoids the probabilistic method of the previous proofs. Formally, using $\mu(H')$ to denote the size of a maximum matching of $H'$, we prove the following guarantee:

\begin{graytbox}
\begin{theorem}\label{thm:main}
	Fix any $0 < \epsilon \leq 1$ and let $\beta \geq 50/\epsilon$ and $\beta^- \geq (1-\frac{\epsilon}{10})\beta$ be integers. For any graph $G$ and any \EDCS{(\beta, \beta^-)} $H$ of $G$ it holds that $\mu(H) \geq \big(\frac{2}{3} - \epsilon \big) \cdot \mu(G).$
\end{theorem}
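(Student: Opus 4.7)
The plan is to apply the Gallai-Edmonds decomposition directly to $H$, which yields an exact formula for $\mu(H)$ in purely combinatorial terms. Write $V = D \cup A \cup C$ for this decomposition: $D$ is the set of vertices missed by some maximum matching of $H$, $A = N_H(D) \setminus D$, and $C = V \setminus (D \cup A)$. By the Gallai-Edmonds structure theorem, $H[C]$ has a perfect matching, each component of $H[D]$ is factor-critical in $H$, no $H$-edge goes between $D$ and $C$, and every maximum matching of $H$ matches each vertex of $A$ to a vertex in a distinct component of $H[D]$. Writing $c$ for the number of components of $H[D]$, this gives the identity
\[
\mu(H) \;=\; \tfrac{1}{2}\bigl(|V| - c + |A|\bigr).
\]
Combined with $\mu(G) \leq |V|/2$, proving the theorem reduces to showing that the deficiency excess $(c - |A|) - (|V| - 2\mu(G))$ is at most $(\tfrac{2}{3} + O(\epsilon))\mu(G)$.

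Next I would fix a maximum matching $M^*$ of $G$ and track its edges through the partition $(D, A, C)$. The Gallai-Edmonds structure of $H$ forces certain classes of $M^*$-edges into $E \setminus E_H$: in particular, every $M^*$-edge joining two distinct components of $H[D]$ and every $M^*$-edge between $D$ and $C$ must be absent from $H$, since the structure theorem prohibits $H$-edges of either type. For each such edge $(u,v)$, property (P2) yields $\deg_H(u) + \deg_H(v) \geq \beta^- = (1 - \epsilon/10)\beta$, providing a large lower bound on the $H$-degree emanating from the boundary of every factor-critical $D$-component. Property (P1) complements this with the upper bound $\beta$ on any edge-degree sum inside $H$, which limits the total $H$-degree that can concentrate around any set of vertices.

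The heart of the argument would then be a discharging / double-counting step combining these two sides. Each component of $H[D]$ contributes one unit of deficiency which can be cancelled only by a matching partner in $A$; the surplus components must be charged to outbound $M^*$-edges via (P2), while (P1) bounds how much $H$-degree can live around $D$-components and around $A$. Carefully matching these bounds should yield a lemma of the form: the surplus $c - |A|$, after subtracting off the inherent deficiency $|V| - 2\mu(G)$ of $G$, is at most $(\tfrac{2}{3} + O(\epsilon))\mu(G)$. The slack $\beta - \beta^- \leq (\epsilon/10)\beta$ is precisely what absorbs the $\epsilon \mu(G)$ error, and the choice $\beta \geq 50/\epsilon$ fixes the remaining constants.

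The main obstacle I anticipate is handling $D$-components of different sizes uniformly within the discharging. A singleton $D$-component contributes a full unit of deficiency but carries no interior $H$-edges, whereas a factor-critical component of size $2\ell+1$ still contributes only one unit of deficiency but has $\Theta(\ell)$ interior $H$-edges whose degrees already count toward (P1); consequently, the ``budget'' that each component exports to its boundary must be normalized by its size. Likewise, one must charge each absent $M^*$-edge to at most one $D$-component and correctly balance the contribution of $A$-to-$D$ edges of $M^*$ against purely $D$-to-$D$ and $D$-to-$C$ edges. This is the delicate combinatorial step where the Gallai-Edmonds framework replaces the probabilistic arguments of prior proofs, and once the resulting counting inequality is established, plugging in $\beta \geq 50/\epsilon$ and $\beta^- \geq (1-\epsilon/10)\beta$ yields $\mu(H) \geq (\tfrac{2}{3} - \epsilon) \mu(G)$ as claimed.
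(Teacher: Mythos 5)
Your setup is sound and shares the paper's starting point: both apply the Gallai--Edmonds decomposition to $H$, and your reduction via the formula $\mu(H) = \frac{1}{2}(|V| - c + |A|)$ is correct (the ``deficiency excess'' you define is exactly $2(\mu(G)-\mu(H))$, so bounding it by $(\frac{2}{3}+2\epsilon)\mu(G)$ is indeed equivalent to the theorem). But from that point on the proposal is a plan rather than a proof: the entire combinatorial core is deferred to a ``discharging / double-counting step'' that ``should yield a lemma of the form\ldots'', and you explicitly flag the main obstacle (normalizing the budget of $D$-components of different sizes, and charging each absent $M^\star$-edge to at most one component) without resolving it. That obstacle is real, and it is precisely where the work lies.

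Two missing ingredients are worth naming concretely. First, your charging scheme only extracts (P2) from $M^\star$-edges between distinct $D$-components or between $D$ and $C$; but a surplus $D$-component may reach the rest of the graph only through $A$ (via $M^\star$-edges that may well lie in $H$), in which case your scheme assigns it no (P2) leverage at all. The paper handles this by analyzing the alternating paths of $M \cup M^\star$: its key structural lemma shows that every augmenting path, after contracting $D$-components, either zigzags $D$--$A$--$D$--$A$--$\cdots$ until two such chains meet at a single ``suitable'' non-$H$ edge joining two $D$-components whose special vertices lie on the path, or else contains two vertex-disjoint suitable edges. This dichotomy is what lets each unit of $\mu(G)-\mu(H)$ be charged to concrete non-$H$ edges where (P2) applies. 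Second, you need a quantitative tool converting ``(P2) forces high average $H$-degree on these endpoints'' plus ``(P1) caps edge-degrees'' into a lower bound on $\mu(H)$; the paper does this by building an auxiliary bipartite graph and invoking a convexity estimate (if one side of a bipartite graph with edge-degree at most $\beta$ has average degree $d$, the other side has at least $\frac{d}{\beta-d}$ times as many vertices), combined with a fractional vertex-weighting that certifies $\mu(H)$. Without these two pieces---the structural dichotomy for augmenting paths and the degree-counting lemma---the counting inequality you need does not follow from (P1) and (P2) alone, so the proposal as written has a genuine gap.
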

\end{graytbox}

\Cref{thm:main} leads to improvements in many applications of EDCS. Here we briefly mention one of these applications which is in the {\em one-way communication complexity} model: Alice and Bob are each given a subset of the edges of a graph $G$. The goal is for Alice to send a small message to Bob such that Bob can report a large matching of $G$. This communication problem has been studied extensively due to its applications in the streaming setting. It is known that a $2/3$-approximation is the best approximation achievable via  $\widetilde{O}(n)$ communication protocols \cite{DBLP:conf/soda/GoelKK12}. Additionally,  \cite{AssadiBernsteinSOSA19} first showed that EDCS can be employed to obtain a $(\frac{2}{3} - \epsilon)$-approximation for general graphs with $O(n \log(1/\epsilon)/\epsilon^2)$ communication. Plugging \Cref{thm:main} in the analysis of \cite[Section~4]{AssadiBernsteinSOSA19} immediately improves the communication to $O(n/\epsilon)$:

\begin{corollary}
	There is a deterministic poly-time one-way communication complexity protocol that given $\epsilon > 0$, finds a $(\frac{2}{3} - \epsilon)$-approximate matching with $O(n/\epsilon)$ communication from Alice to Bob.
\end{corollary}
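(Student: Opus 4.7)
The plan is to instantiate the EDCS-based protocol of \cite[Section~4]{AssadiBernsteinSOSA19} with the parameters permitted by \Cref{thm:main}. Concretely: Alice deterministically constructs a \EDCS{(\beta, \beta^-)} $H_A$ of her own subgraph $G_A = (V, E_A)$ with $\beta = 50/\epsilon$ and $\beta^- = (1-\epsilon/10)\beta$, and transmits the edge set $E_{H_A}$ to Bob. Bob then deterministically constructs a \EDCS{(\beta, \beta^-)} $H$ of the received graph $(V, E_{H_A} \cup E_B)$ and outputs a maximum matching of $H$, which he computes in polynomial time via Edmonds' blossom algorithm. Both EDCS constructions use the standard local-search procedure (starting from $\emptyset$, iteratively delete any edge violating P1 and insert any edge violating P2), which terminates in $\poly(n)$ iterations by a simple potential argument and is fully deterministic.

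The communication bound is immediate from property P1: every vertex of $H_A$ has degree at most $\beta$, so $|E_{H_A}| \leq n\beta/2 = O(n/\epsilon)$. The approximation bound is where \Cref{thm:main} enters. Following \cite[Section~4]{AssadiBernsteinSOSA19}, one verifies that Bob's output $H$ satisfies the EDCS conditions with respect to the \emph{full} graph $G = (V, E_A \cup E_B)$, possibly after widening the parameter gap by a small constant factor to absorb the perturbation caused by Bob's updates. Property P1 of $H$ on $G$ is inherited trivially. Property P2 splits into two cases: for edges in $(E_{H_A} \cup E_B) \setminus H$ the bound follows directly from Bob's construction, while for edges in $E_A \setminus E_{H_A}$ one must translate Alice's P2 guarantee on $H_A$ into a bound on $\deg_H$, using the fact that the local-search procedure cannot substantially decrease the degree of a vertex that was heavily saturated in $H_A$. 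Once $H$ is certified as an EDCS of $G$ with the required thresholds, \Cref{thm:main} yields $\mu(H) \geq (\tfrac{2}{3} - \epsilon)\mu(G)$.

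The core obstacle is the degree-transfer argument in the previous paragraph, but this is already handled in \cite[Section~4]{AssadiBernsteinSOSA19} for their much larger $\beta = O(\log(1/\epsilon)/\epsilon^2)$. Their argument is essentially parameter-oblivious: it shows that Bob's $H$ is an EDCS of $G$ with parameters that differ from Alice's and Bob's chosen thresholds only by constant factors, so the analysis carries over verbatim once the constants in $\beta$ and $\beta^-$ are adjusted by a bounded amount. Hence the entire improvement in communication from $O(n\log(1/\epsilon)/\epsilon^2)$ down to $O(n/\epsilon)$ is driven by the stronger guarantee in \Cref{thm:main}, which allows $\beta$ to shrink from $O(\log(1/\epsilon)/\epsilon^2)$ to $O(1/\epsilon)$.
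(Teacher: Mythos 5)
Your high-level plan coincides with the paper's: the paper proves this corollary by simply running the one-way protocol of \cite[Section~4]{AssadiBernsteinSOSA19} and plugging \Cref{thm:main} into its analysis, and your communication bound $|E_{H_A}| \le n\beta/2 = O(n/\epsilon)$ is exactly the right (and only new) quantitative point. However, the way you argue the approximation guarantee has a genuine gap. You have Bob recompute a $(\beta,\beta^-)$-EDCS $H$ of $(V, E_{H_A}\cup E_B)$ and then claim that $H$ is, after a constant-factor widening of the parameters, an EDCS of the \emph{full} graph $G$, justified by a ``degree-transfer'' property of the local search that you attribute to \cite{AssadiBernsteinSOSA19}. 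Neither part is right. In the protocol of \cite{AssadiBernsteinSOSA19} Bob does not sparsify again; he simply outputs a maximum matching of $H_A \cup E_B$, and their analysis never certifies any subgraph as an EDCS of $G$. Indeed your certification claim is false in general: for an edge $(u,v)\in E_A\setminus E_{H_A}$, Alice's property (P2) gives $\deg_{H_A}(u)+\deg_{H_A}(v)\ge \beta^-$, but Bob's rebuilt $H$ only enforces (P2) for edges of $E_{H_A}\cup E_B$. If Bob's edges saturate the $H_A$-neighbors of $u$ and $v$, then $H$ may drop \emph{every} edge incident to $u$ and $v$ (each dropped edge $(u,x)$ satisfies (P2) through $\deg_H(x)$ alone), leaving $\deg_H(u)+\deg_H(v)=0$; no constant-factor adjustment of $\beta,\beta^-$ repairs (P2) for $(u,v)$. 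Moreover, even setting this aside, outputting a maximum matching of a \emph{second} EDCS needs its own argument: naively composing ``$H_A\cup E_B$ contains a $(\frac23-\epsilon)$-approximate matching of $G$'' with ``$H$ contains a $(\frac23-\epsilon)$-approximate matching of $H_A\cup E_B$'' only yields roughly a $\frac{4}{9}$-approximation.

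The correct route, which is what the cited analysis does and what the paper implicitly invokes, avoids any claim about an EDCS of $G$: Bob outputs a maximum matching of $H_A\cup E_B$, and one shows $\mu(H_A\cup E_B)\ge (\frac23-\epsilon)\mu(G)$ by a one-shot application of the EDCS approximation guarantee to an auxiliary pair of graphs. Concretely, fix a maximum matching $M^\star$ of $G$, let $\widetilde H := H_A \cup \{e\in M^\star\cap E_B : \deg_{H_A}(e)\le \beta-2\}$ and $\widetilde G := (V, E_A\cup(M^\star\cap E_B))$; since $M^\star\cap E_B$ is a matching, $\widetilde H$ is a $(\beta+2,\beta^-)$-EDCS of $\widetilde G$ (discarded $M^\star$-edges and edges of $E_A\setminus H_A$ already satisfy (P2)), $\widetilde H\subseteq H_A\cup E_B$, and $\mu(\widetilde G)\ge \mu(G)$. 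Applying \Cref{thm:main} to $(\widetilde G,\widetilde H)$, with the constants in $\beta,\beta^-$ adjusted to absorb the $+2$, gives the claimed bound. With this replacement your proposal is correct, and the improvement from $O(n\log(1/\epsilon)/\epsilon^2)$ to $O(n/\epsilon)$ communication is, as you say, driven entirely by the smaller $\beta$ permitted by \Cref{thm:main}.
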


\section{Preliminaries}

Given a graph $G = (V, E)$ and a vertex subset $U \subseteq V$, we use $G[U]$ to denote the subgraph of $G$ induced on $U$. For any vertex $v$ we use $\deg_G(u)$ to denote the degree of $v$ in graph $G$, and for any edge $e=(u, v)$ we use $\deg_G(e)$ to denote $\deg_G(u) + \deg_G(v)$; when graph $G$ is clear from context, we may drop the subscript. For any integer $k$, we use $[k]$ to denote set $\{1, \ldots, k\}$.

We will need the following fact.

\begin{fact}\label{fact:1/1+x} 
	For any $x \geq 0$, $\frac{1 - x}{1 + x} \geq 1-2x$.
\end{fact}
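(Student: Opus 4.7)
The plan is to prove the inequality by a direct algebraic manipulation, since the statement is purely an elementary fact about a one-variable rational function. Because $x \geq 0$ implies $1 + x > 0$, I may clear denominators without flipping the inequality, which reduces the claim to a polynomial inequality that is immediate to verify.

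Concretely, first I would note that $\frac{1-x}{1+x} \geq 1 - 2x$ is equivalent, via multiplying both sides by $1+x > 0$, to
\[
1 - x \;\geq\; (1-2x)(1+x).
\]
Then I would expand the right-hand side as $(1-2x)(1+x) = 1 + x - 2x - 2x^2 = 1 - x - 2x^2$, so the inequality becomes
\[
1 - x \;\geq\; 1 - x - 2x^2,
\]
which simplifies to $2x^2 \geq 0$, a tautology for real $x$.

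Alternatively, and perhaps more transparently, I would compute the difference directly:
\[
\frac{1-x}{1+x} - (1-2x) \;=\; \frac{(1-x) - (1-2x)(1+x)}{1+x} \;=\; \frac{2x^2}{1+x},
\]
and observe that the numerator is nonnegative and the denominator is strictly positive for $x \geq 0$, so the difference is nonnegative, as required. There is no real obstacle here; the only subtlety is to remember that the sign of $1+x$ must be positive to preserve the inequality direction when cross-multiplying, which is why the hypothesis $x \geq 0$ (rather than merely $x \neq -1$) is stated.
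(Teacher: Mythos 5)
Your proof is correct, and it is the standard elementary verification (clearing the positive denominator $1+x$ to reduce the claim to $2x^2 \geq 0$); the paper itself states this fact without proof precisely because this is the obvious argument. Nothing further is needed.
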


We also use the following property of edge-degree bounded bipartite graphs implicit in \cite{AssadiBernsteinSOSA19,BernsteinSteinICALP15}. The following proof is particularly similar to the proof of \cite[Proposition~2.6]{AssadiBernsteinSOSA19}.

\begin{lemma}\label{lem:partsizeEDCS}
	Let $G=(P, Q, E)$ be a bipartite graph with every edge $e \in E$ satisfying $\deg_G(e) \leq \beta$. Let $d_P := |E|/|P|$ be the average degree of the vertices in $P$; then
	$
	|Q| \geq \frac{d_P}{\beta - d_P} |P|.
	$
\end{lemma}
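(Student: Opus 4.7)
The plan is to prove the bound by double counting the quantity $\sum_{(u,v) \in E} \deg_G(e)$, using the edge-degree constraint on the one hand and Cauchy--Schwarz on the other.

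First I would use the EDCS-like hypothesis to get the upper bound
\[
\sum_{e \in E} \deg_G(e) \;\leq\; \beta \cdot |E|.
\]
Then I would rewrite the left-hand side by summing each endpoint's degree over the edges incident to it:
\[
\sum_{e=(u,v) \in E} \big(\deg_G(u) + \deg_G(v)\big) \;=\; \sum_{u \in P} \deg_G(u)^2 \;+\; \sum_{v \in Q} \deg_G(v)^2.
\]
Next I would apply Cauchy--Schwarz (equivalently, the power-mean inequality) to each sum separately, using that $\sum_{u \in P} \deg_G(u) = \sum_{v \in Q} \deg_G(v) = |E|$:
\[
\sum_{u \in P} \deg_G(u)^2 \;\geq\; \frac{|E|^2}{|P|} \;=\; d_P \cdot |E|, \qquad \sum_{v \in Q} \deg_G(v)^2 \;\geq\; \frac{|E|^2}{|Q|}.
\]
Chaining these gives $\beta |E| \geq d_P |E| + |E|^2 / |Q|$, and dividing by $|E|$ yields $\beta - d_P \geq |E|/|Q| = d_P |P|/|Q|$, which rearranges to the claimed bound $|Q| \geq \frac{d_P}{\beta - d_P} |P|$.

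There is essentially no obstacle here; the only thing to be slightly careful about is that the Cauchy--Schwarz step requires $|P|$ and $|Q|$ to be nonzero, which we may assume since otherwise $|E|=0$ and the statement is trivial (the assumption $d_P > 0$ is implicit in the statement since otherwise $\frac{d_P}{\beta - d_P}=0$).
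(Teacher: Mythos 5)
Your proof is correct and is essentially the same as the paper's: both double-count $\sum_{e \in E} \deg_G(e)$, bound it above by $\beta|E|$ using the edge-degree hypothesis, and bound it below by $|E|(d_P + |E|/|Q|)$ via convexity/Cauchy--Schwarz on the squared degrees, then rearrange. The only cosmetic difference is that the paper names $d_Q := |E|/|Q|$ explicitly before rearranging.
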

\begin{myproof}
	Let $d_Q := |E|/|Q|$ and observe that $|Q| d_Q = |P| d_P = |E|$. This implies that $|Q| = \frac{d_P}{d_Q}|P|$ and so it suffices to prove $d_Q \leq \beta - d_P$. Toward proving this, first note that
	$$
		\sum_{(u, v) \in E} \deg(v) + \deg(u) = \sum_{v \in P} \deg(v)^2 + \sum_{u \in Q} \deg(u)^2 \geq \sum_{v \in P} \left(\frac{|E|}{|P|}\right)^2 + \sum_{u \in Q} \left(\frac{|E|}{|Q|}\right)^2 = |E| (d_P + d_Q),
	$$
	where the inequality follows from the fact that $\sum_{v \in P} \deg(v) = \sum_{u \in Q} \deg(u) = |E|$, and both $\sum_{v \in P} \deg(v)^2$ and $\sum_{u \in Q} \deg(u)^2$ are minimized when the summands are equal. 
	
	Next, observe that $\sum_{(u, v) \in E} \deg(v) + \deg(u) \leq \beta |E|$ by property (P1) of EDCS. Combined with inequality above, this gives $|E|(d_P + d_Q) \leq \beta|E|$ which proves $d_Q \leq \beta - d_P$.
\end{myproof}

\paragraph{Gallai–Edmonds Decomposition.}
Consider any graph $G=(V, E)$. The Gallai-Edmonds decomposition for $G$ is a partitioning of the vertex set $V$ into three subsets $D(G), A(G), C(G)$ defined as follows. The set $D(G)$ includes a vertex $v$ iff there exists some maximum matching of $G$ that does not match $v$. The set $A(G)$ includes any vertex not in $D(G)$ that is adjacent to a vertex in $D(G)$. The set $C(G)$ includes all the rest of vertices, i.e., those that are not in $D(G)$ or $A(G)$. 

The following proposition summarizes some of the properties of the decomposition that we use in this paper. See \cite[Theorem~3.2.1]{MatchingTheoryBook} for more properties and the proof.

\begin{proposition}\label{prop:GE}
	Let $G$ be any graph and let $D(G), A(G), C(G)$ be defined as above. Fix any arbitrary maximum matching $M$ of $G$; it must satisfy all the following:
	\begin{enumerate}[label=$(\arabic*)$, itemsep=0pt, topsep=0pt]
		\item Every vertex in $C(G)$ is matched to another vertex in $C(G)$.
		\item Every vertex in $A(G)$ is matched to a vertex in $D(G)$.
		\item Every connected component $O$ of $G[D(G)]$ has exactly one vertex that is either unmatched or matched to $A(G)$; all the other $|O|-1$ vertices of $O$ are matched together. Note, in particular, that this implies all connected components of $G[D(G)]$ are of odd size.
	\end{enumerate}
\end{proposition}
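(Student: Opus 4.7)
The plan is to follow the classical proof of the Gallai--Edmonds structure theorem. It reduces to two key facts about $G$: (a) every connected component of $G[D(G)]$ is \emph{factor-critical} --- i.e., removing any vertex leaves a graph with a perfect matching, so in particular each component has odd size; and (b) $A(G)$ is a maximizer in the Tutte--Berge formula for the matching deficiency, meaning $o(G - A(G)) - |A(G)| = |V| - 2\mu(G)$, where $o(\cdot)$ counts odd components. Given (a) and (b), the three stated properties fall out by applying the Berge formula to any fixed maximum matching $M$.

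The heart of the argument is \textbf{Gallai's Lemma}: a connected graph $H$ in which every vertex is missed by some maximum matching of $H$ is factor-critical. I would prove this by induction on $|V(H)|$ --- pick any edge $uv$ of $H$ and fix maximum matchings $M_u, M_v$ missing $u, v$ respectively, and analyze the connected components of the symmetric difference $M_u \oplus M_v$. The vertices $u$ and $v$ must lie in a common component of $M_u \oplus M_v$ (otherwise stitching together the $M_u$-half near $u$ with the $M_v$-half near $v$ yields a matching of $H$ covering both $u$ and $v$, contradicting the maximality of $M_u$); that component is necessarily an alternating path $P$ from $u$ to $v$, and shortening $P$ at one end reduces to a strictly smaller instance where the inductive hypothesis applies. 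Fact (a) then follows by applying Gallai's Lemma to each connected component of $G[D(G)]$, after first checking via a standard alternating-path lift that every vertex of such a component can be missed by some maximum matching of the component itself (this uses the observation that, by definition of $A(G)$, no vertex in $D$ is adjacent to any vertex in $C$, so edges of matchings incident to a $D$-component only stay inside the component or cross to $A$).

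Fact (b) is a direct consequence of (a): each factor-critical component of $G[D(G)]$ leaves at least one vertex not matched internally by $M$, and since $D$-vertices only have neighbors in $D \cup A$, each such leftover vertex is either globally unmatched or matched into $A(G)$; at most $|A(G)|$ of them can be matched into $A$, so the deficiency of $M$ is at least $o(G - A) - |A|$, and the reverse inequality is Tutte--Berge. Equality throughout this count then forces the three conclusions simultaneously: every $A$-vertex is matched, and matched into $D$ (property 2); exactly one vertex per component of $G[D]$ is left internally unmatched and is either globally unmatched or matched into $A$ (property 3); and since $C$-vertices are not adjacent to $D$-vertices (by definition of $A$) and $A$ has already been exhausted onto $D$-vertices, the $C$-vertices must match among themselves (property 1). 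The main obstacle is the case analysis inside Gallai's Lemma --- showing that $u$ and $v$ share a component of $M_u \oplus M_v$ and correctly setting up the smaller induction instance from the alternating path; this is the only non-routine step, and everything else is bookkeeping driven by the definitions of $D$ and $A$.
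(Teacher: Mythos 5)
The paper never proves this proposition from scratch: it invokes the Gallai--Edmonds structure theorem as a known result, citing \cite[Theorem~3.2.1]{MatchingTheoryBook}. So you are attempting a genuine reproof, and your outline follows the classical route (Gallai's lemma for factor-criticality of the $D$-components, then a deficiency count). The problem is that the second half of your plan has the key inequality pointing the wrong way. Your count with a fixed maximum matching $M$ (each $D$-component leaves at least one vertex unmatched internally, at most $|A|$ of these can be absorbed by $A$) gives $\mathrm{def}(G) \geq c(G[D]) - |A|$, and the Tutte--Berge formula, applied to the particular set $A$, gives exactly the same direction, $\mathrm{def}(G) = \max_{S}\big(o(G-S)-|S|\big) \geq o(G-A)-|A|$. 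What you actually need for ``equality throughout the count'' is the reverse bound $\mathrm{def}(G) \leq c(G[D]) - |A|$, i.e.\ that $A$ \emph{attains} the Tutte--Berge maximum; equivalently, that some matching perfectly matches $C$, matches all of $A$ into distinct components of $G[D]$, and near-perfectly matches each such component. Tutte--Berge only asserts that \emph{some} set attains the maximum, not that $A(G)$ does, so calling fact (b) ``a direct consequence'' of (a) skips the actual hard content of the Gallai--Edmonds theorem. Standard treatments close this gap with the Stability Lemma ($D(G-a)=D(G)$, $A(G-a)=A(G)\setminus\{a\}$, $\mu(G-a)=\mu(G)-1$ for $a\in A$) plus induction on $|A|$, or with a Hall-type/surplus argument for the bipartite graph between $A$ and the contracted components of $G[D]$; your proposal contains neither. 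Without that step, properties (1)--(3) do not follow, since the forcing argument relies precisely on the deficiency equality. (A smaller mismatch: your count involves $c(G[D])$ while $o(G-A)$ also counts odd components of $G[C]$; these coincide only after you know $C$ is perfectly matched, which is part of what is being proved.)

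There is also a flaw in your sketch of Gallai's lemma. If $u$ and $v$ lay in different components of $M_u \oplus M_v$, the contradiction is not that the stitched matching \emph{covers} both $u$ and $v$ (a maximum matching may well do that); it is that stitching produces a maximum matching missing \emph{both} endpoints of the edge $uv$, which could then be enlarged by adding $uv$. Moreover, the inductive step ``shorten $P$ at one end and apply the hypothesis to a smaller instance'' is not set up: it is unclear what the smaller graph is and why it remains connected with every vertex missed by some of \emph{its} maximum matchings. The classical argument instead takes two exposed vertices at minimum distance (or a maximum matching missing two vertices chosen to minimize their distance), examines an intermediate vertex $t$ on a shortest path and a maximum matching missing $t$, and derives a contradiction from $M \oplus M_t$. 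These two gaps --- the flipped inequality for fact (b) and the unsubstantiated induction in Gallai's lemma --- are exactly the non-routine parts of the theorem, so as it stands the proposal does not constitute a proof; citing the structure theorem, as the paper does, or importing the Stability Lemma, is the honest fix.
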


\section{The New Proof of the Key Property of EDCS}

Let $H$ be a $(\beta, \beta^-)$-EDCS of a graph $G$ where $\beta$ and $\beta^-$ satisfy the conditions of \Cref{thm:main}. We prove \Cref{thm:main} in this section that $\mu(H) \geq (\frac{2}{3} - \epsilon)\mu(G)$. 

In our proof, we apply the Gallai-Edmonds decomposition on graph $H$. Let us for brevity use $D := D(H), A := A(H),$ and $C := C(H)$ to refer to the decomposition. Fix an arbitrary maximum matching $M$ of $H$, and an arbitrary maximum matching $M^\star$ of $G$. We call a vertex {\em special} if it belongs to $D$ and is either unmatched by $M$ or is matched by $M$ to a vertex in $A$. We let $S \subseteq V$ denote the set of special vertices. \Cref{prop:GE} part (3) immediately implies the following.

\begin{observation}\label{obs:special}
	Every connected component in $H[D]$ has exactly one special vertex.	
\end{observation}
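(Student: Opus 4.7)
The plan is to read off the observation as an immediate consequence of \Cref{prop:GE}, applied to the graph $H$ together with its own Gallai-Edmonds decomposition $(D,A,C) = (D(H), A(H), C(H))$ and the fixed maximum matching $M$ of $H$. Part $(3)$ of that proposition, instantiated on $H$, asserts that in every connected component $O$ of $H[D]$ there is exactly one vertex that is either unmatched by $M$ or matched by $M$ to some vertex of $A$, and that the remaining $|O|-1$ vertices of $O$ are matched together within $O$.

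From here the argument is essentially a relabeling. Every vertex of a component $O$ of $H[D]$ lies in $D$ by construction, so the distinguished vertex from part $(3)$ automatically satisfies the definition of a special vertex. In the other direction, I would argue that no other vertex of $O$ can be special: all of the remaining $|O|-1$ vertices are matched by $M$ to partners inside $O \subseteq D$, hence they are neither unmatched nor matched into $A$. Combining the two directions yields exactly one special vertex per component, which is the claim.

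There is no real obstacle in the proof; the only thing to keep straight is that the Gallai-Edmonds decomposition and \Cref{prop:GE} are being invoked on $H$ (not on the base graph $G$), so that the maximum matching $M$ to which the proposition refers is precisely the fixed maximum matching of $H$ used in the definition of ``special.'' Because the observation is really just a restatement of \Cref{prop:GE}$(3)$ in the new terminology, I would expect the written proof to be a single sentence pointing at that part of the proposition.
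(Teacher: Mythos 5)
Your proposal is correct and matches the paper exactly: the paper simply states that \Cref{prop:GE} part (3), applied to $H$ and the fixed maximum matching $M$, immediately implies the observation. Your extra sentence checking that the other $|O|-1$ vertices cannot be special (being matched within $O \subseteq D$) is a harmless elaboration of the same one-line argument.
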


Since $M$ and $M^\star$ are matchings, their union is a collection of paths and cycles whose edges alternate between the two matchings. Define an {\em augmenting path} to be any path in $M \cup M^\star$ that starts and ends with an edge of $M^\star$, and let $\mc{P}$ be the set of all augmenting paths.

\begin{observation}\label{obs:num-of-augpaths}
	$|\mc{P}| \geq \mu(G) - \mu(H)$.
\end{observation}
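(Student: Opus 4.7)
The plan is to use the standard symmetric-difference / augmenting-path decomposition applied to $M \cup M^\star$. Since $M$ and $M^\star$ are both matchings, every vertex has degree at most $2$ in their union, so $M \cup M^\star$ decomposes into a disjoint collection of simple paths and simple cycles, and along each such path or cycle the edges alternate between $M$ and $M^\star$.

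I would then classify each connected component of $M \cup M^\star$ by its contribution to $|M^\star| - |M|$: every cycle has even length and contributes $0$; every path that starts and ends with an $M$-edge contributes $-1$; every path whose two endpoints use edges of different matchings contributes $0$; and every path that starts and ends with an $M^\star$-edge (these are precisely the augmenting paths in $\mathcal{P}$) contributes $+1$. Summing over all components gives
\[
\mu(G) - \mu(H) \;=\; |M^\star| - |M| \;=\; |\mathcal{P}| \;-\; (\text{number of paths starting and ending with } M).
\]
Dropping the nonnegative term on the right yields $|\mathcal{P}| \geq \mu(G) - \mu(H)$, as claimed.

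There is no real obstacle here; this is a textbook augmenting-path counting argument. The only thing to be slightly careful about is ruling out degenerate components (isolated vertices or single edges): an isolated vertex contributes $0$, a single $M^\star$-edge is itself an augmenting path and contributes $+1$, and a single $M$-edge contributes $-1$. These edge cases fit the classification above without modification, so the bound goes through.
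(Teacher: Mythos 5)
Your proof is correct and is essentially the same argument as the paper's: both decompose $M \cup M^\star$ into alternating paths and cycles, observe that only the paths in $\mathcal{P}$ contribute $+1$ to $|M^\star| - |M|$ while every other component contributes at most $0$, and sum over components. Your version just spells out the case analysis (cycles, $M$-ended paths, mixed paths, degenerate components) a bit more explicitly.
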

\begin{myproof}
	For any augmenting path $P \in \mc{P}$, $|P \cap M^\star| = |P \cap M| + 1$ and for any other component $C \in (M \cup M^\star) \setminus \mc{P}$, $|C \cap M^\star| \leq |C \cap M|$. Hence, $|\mc{P}| \geq |M^\star| - |M| = \mu(G) - \mu(H)$.
\end{myproof}

\Cref{lem:Pstructure} (see also \Cref{fig:Pstructure}) is our most technical lemma and is the key to our analysis.

\begin{lemma}\label{lem:Pstructure}
	Take any augmenting path $P \in \mc{P}$. Call an edge $e=(u, v)$ of $P$ {\em suitable} if $e \not\in H$, $u \not\in A$, and $v \not\in A$. At least one of the following should hold for $P$:
	\begin{enumerate}[label={\normalfont (T\arabic*)}, itemsep=0pt, topsep=0pt]
		\item $P$ includes a suitable edge $e$ such that the endpoints of $e$ belong to two distinct connected components $O_1$ and $O_2$ of $H[D]$ and the special vertices of $O_1$ and $O_2$ both belong to $P$.
		\item $P$ includes two vertex disjoint suitable edges.
	\end{enumerate}
\end{lemma}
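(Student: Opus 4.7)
The plan is to analyze the augmenting path $P = v_0 v_1 \cdots v_{2k+1}$ through the Gallai--Edmonds decomposition of $H$. I will begin by recording a few setup facts. Since $v_0$ and $v_{2k+1}$ are $M$-unmatched, both endpoints lie in $D$ and are special. By \Cref{prop:GE}, every $M$-pair of $P$ is of one of three types: both endpoints in $C$; both in a single component of $H[D]$; or one in $A$ paired with a special $D$-vertex. Since $M^\star$ is a matching, any two distinct $M^\star$-edges of $P$ are automatically vertex-disjoint, so (T2) is equivalent to saying $P$ contains at least two suitable edges. I will also use repeatedly the following structural remark: any $M^\star$-edge of $P$ connecting $D$ to $C$, or connecting two distinct components of $H[D]$ (and having no $A$-endpoint), must be suitable, because the $H$-neighbors of any $D$-vertex lie in $D \cup A$ and an $H$-edge between two $D$-vertices places them in a single component.

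With these preliminaries in hand, my approach is a case split on $s$, the number of suitable edges in $P$. If $s \geq 2$ then (T2) holds immediately. If $s = 0$, I will derive a contradiction with maximality of $M$ in $H$: since $P \not\subseteq H$ (otherwise $P$ would be $M$-augmenting inside $H$), at least one $M^\star$-edge of $P$ is not in $H$, and by assumption must touch $A$. Tracking the AD $M$-pair structure enforced by \Cref{prop:GE} and using the factor-critical structure of the components of $H[D]$, I will show that every transition $P$ is forced to make between distinct components of $H[D]$ (required because $v_0$ and $v_{2k+1}$ lie in different components) can be recombined with the bad $M^\star$-edges and $H$-alternating segments of $P$ into an $M$-augmenting structure inside $H$, contradicting maximality of $M$. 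If $s = 1$ with unique suitable edge $e^* = (u, v)$, I will show (T1) holds. To rule out $u \in C$ (and symmetrically $v \in C$), observe that $u$'s $M$-mate lies in $C$, and walking along $P$ from that mate toward an endpoint of $P$ must exit the $C$-region through an AD $M$-pair — because a direct $C$-$D$ $M^\star$-transition would be a second suitable edge by the structural remark; after the exit we land at the special vertex of some new $D$-component, and then reaching the endpoint's component forces another cross-component $M^\star$-edge, again made suitable by the remark, contradicting $s = 1$. The same component-chasing argument rules out $u$ and $v$ being in the same component, so they lie in distinct components $O_1, O_2$. For the special-vertex requirement, tracing the walk along $P$ from $u$ toward $v_0$: it stays inside $O_1$ unless it crosses via an AD $M$-pair whose $D$-member is precisely the special of $O_1$; thus either $v_0$ itself (when $O_1 = O_0$), the $D$-member of such an AD pair, or $u$ itself (if $u$ is special, e.g. when $u$ is $M$-matched to an $A$-vertex) witnesses the special of $O_1$ on $P$. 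The $v$-side is symmetric, and together these give (T1).

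The main obstacle is the component-chasing bookkeeping that underlies both the $s = 0$ contradiction and the elimination of $u \in C$ or $u, v$ in a common component: carefully tracking which component of $H[D]$ the path currently lies in, and showing that every transition avoiding an extra suitable edge either stays inside the current component or passes through an AD $M$-pair whose $D$-member is itself on $P$. This is where essentially all the substantive content of the lemma resides, and it is what replaces the probabilistic arguments of the earlier EDCS proofs.
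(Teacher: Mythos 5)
The preliminary observations you record are all correct and match the paper's setup: the endpoints of $P$ are special vertices of (necessarily distinct) components of $H[D]$; every suitable edge lies in $M^\star$, so any two of them are automatically vertex-disjoint and (T2) reduces to counting; and any $M^\star$-edge of $P$ going between $D$ and $C$, or between two distinct components of $H[D]$, is suitable (this is exactly the paper's Claim~\ref{cl:clllc39-32} after contracting the components of $H[D]$). However, what follows is a plan rather than a proof, and the part you defer --- the ``component-chasing bookkeeping'' --- is not a routine verification; it is the entire content of the lemma, as you yourself note. Concretely, two steps are asserted without an argument that would survive scrutiny. First, in the $s=1$ case you claim that after exiting $C$ through an $A$--$D$ $M$-pair, ``reaching the endpoint's component forces another cross-component $M^\star$-edge.'' It does not, at least not immediately: the path can in principle keep alternating (component of $H[D]$) $\to A \to$ (special vertex of a new component) $\to \cdots$ without ever producing a suitable edge, and the terminal component can be entered by an $M^\star$-edge from $A$ landing directly on the unmatched endpoint, which is not suitable. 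The actual contradiction comes from a parity/uniqueness argument: each $A$-vertex's $M$-mate is the \emph{unique} special vertex of a fresh component, the path can only leave such a component by an $M^\star$-edge, and the forced alternation launched from the unmatched endpoint must eventually collide with this one --- producing a $D$--$D$ suitable edge whose two component specials are witnessed on $P$. That collision argument is precisely what needs to be written down, and it is what simultaneously handles your $s=0$ case, the exclusion of $u\in C$, the exclusion of $u,v$ in a common component, and the special-vertex requirement of (T1).

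Second, your proposed route for $s=0$ --- recombining the $H$-edges of $P$ with the factor-critical structure of the components into an $M$-augmenting path inside $H$, contradicting maximality of $M$ --- is a genuinely different and substantially harder strategy than what is needed, and you give no construction. The paper never builds an augmenting path in $H$; it shows directly that if there were no ``good'' (cross-component or $D$--$C$) $M^\star$-edge, the forced alternation $D, A, D, A, \ldots$ starting from one unmatched endpoint would require the special vertex of the other endpoint's component to be matched to $A$, while it is in fact unmatched --- an immediate contradiction via Observation~\ref{obs:special}. I recommend replacing the three-way case split on $s$ with the paper's single contraction-and-collision argument: contract the components of $H[D]$ in $P$ to get a walk, observe that two ``good'' steps give (T2), and otherwise trace the forced $D$--$A$ alternation from both ends until they meet at the unique good step, which then satisfies (T1).
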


\begin{figure}[h]
  \centering
  \includegraphics[width=\textwidth]{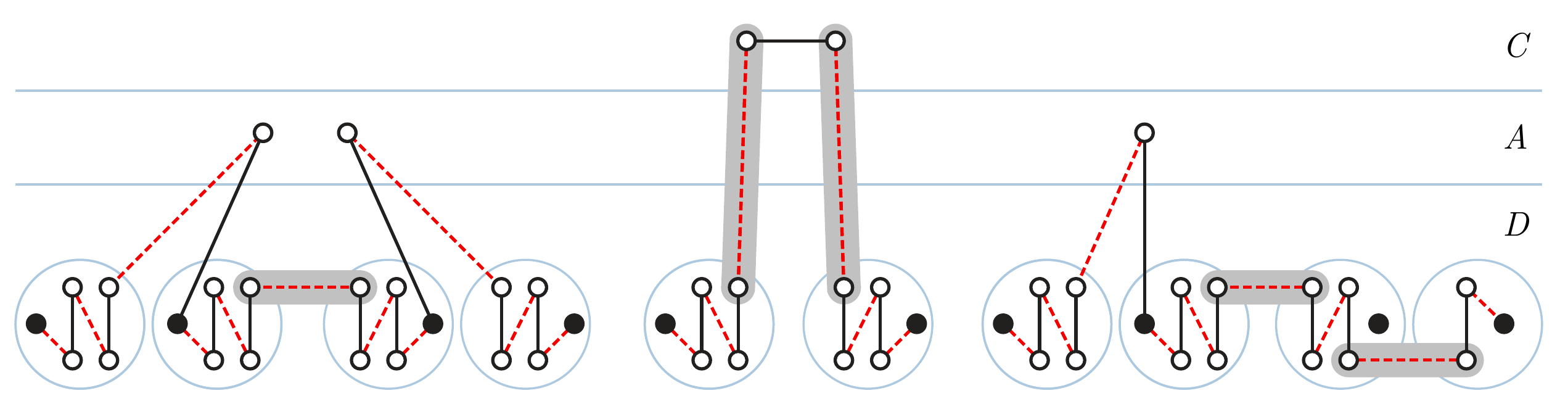}
  \caption{An illustration of how augmenting paths satisfy \Cref{lem:Pstructure}. Here the Gallai-Edmonds partitioning $C, A, D$ of $H$, as well as the connected components of $H[D]$ are shown. The black edges are the edges of $M$, the dashed (red) edges are the edges of $M^\star$, and special vertices are colored black. Edges highlighted with gray are the ``suitable'' edges defined in \Cref{lem:Pstructure}. The augmenting path on the left has one suitable and satisfies \Cref{lem:Pstructure}-(T1). The other two augmenting paths have two suitable edges each, and satisfy \Cref{lem:Pstructure}-(T2).}
  \label{fig:Pstructure}
\end{figure}

\begin{proof}
	Let $W=(w_1, \ldots, w_k)$ be obtained from $P$ after contracting all the vertices of each connected component in $H[D]$ and removing the resulting self-loops. Note that $W$ is a walk but not necessarily a path, since $P$ may enter and leave a component of $H[D]$ multiple times. We use $e_i$ to denote the edge in $P$ that corresponds to the edge $(w_i, w_{i+1})$ of $W$. Also, with a slight abuse of notation, we say $w_i \in D$ if $w_i$ is a contracted component of $H[D]$.
	
	From the definition of augmenting paths, the endpoint vertices of $P$ must be unmatched by $M$. Recall from \Cref{prop:GE} that every vertex unmatched by $M$ must belong to a unique connected component of $H[D]$. As such, $w_1 \not= w_k$ and $w_1, w_k \in D$.
	
	We call index $i \in [k-1]$ {\em good} if $(w_i \in D, w_{i+1} \in D)$ or $(w_i \in D, w_{i+1} \in C)$ or $(w_i \in C, w_{i+1} \in D)$.
	
	\begin{claim}\label{cl:clllc39-32}	
		For any good index $i$,  $e_i$ is suitable.
	\end{claim}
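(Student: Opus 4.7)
The plan is to check the two requirements in the definition of \emph{suitable} separately: that neither endpoint of $e_i$ belongs to $A$, and that $e_i \not\in E_H$. Throughout, I would lean on the standard separation property of the Gallai--Edmonds decomposition: by the very definition of $A(H)$, any vertex of $H$ that is adjacent (in $H$) to some vertex of $D$ must lie in $D \cup A$. In particular, no edge of $H$ joins $C$ to $D$, and every $H$-edge whose both endpoints lie in $D$ is an edge of $H[D]$ and thus stays within a single connected component of $H[D]$.

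First I would handle the $A$-condition. Since $i$ is good, both $w_i$ and $w_{i+1}$ belong to $D \cup C$ (identifying each contracted component of $H[D]$ with a symbol in $D$). The actual endpoints of $e_i$ in $P$ are therefore vertices of $H$ that sit either inside a component of $H[D]$ (so in $D$) or are themselves $C$-vertices. Either way neither endpoint lies in $A$, which settles the second and third clauses in the definition of suitable.

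Next I would rule out $e_i \in E_H$ by splitting into the three good cases. If $w_i \in C$ and $w_{i+1} \in D$, or symmetrically, then $e_i$ would be an $H$-edge with one endpoint in $C$ and the other in $D$, which is impossible by the separation property above. If $w_i \in D$ and $w_{i+1} \in D$, then because $W$ was obtained from $P$ by contracting each component of $H[D]$ \emph{and removing self-loops}, we have $w_i \neq w_{i+1}$, so these represent two distinct components of $H[D]$; but any $H$-edge between two $D$-vertices lives in $H[D]$ and so cannot cross component boundaries. In every good case we thus conclude $e_i \not\in E_H$, completing the verification that $e_i$ is suitable.

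I do not anticipate a serious obstacle; the claim is essentially a dictionary translation of the definition of the Gallai--Edmonds partition into the contracted walk $W$. The only point that warrants a moment of care is the $(D,D)$ subcase, where I have to explicitly invoke the self-loop removal in the construction of $W$ to be sure that $w_i$ and $w_{i+1}$ genuinely correspond to different components of $H[D]$; after that, everything follows directly.
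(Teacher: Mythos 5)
Your proof is correct and follows essentially the same argument as the paper: the goodness of $i$ immediately rules out endpoints in $A$, and the two cases ($D$--$C$ via the Gallai--Edmonds separation property, $D$--$D$ via distinctness of the contracted components after self-loop removal) rule out $e_i \in H$. Your explicit attention to the self-loop removal in the $(D,D)$ case is a point the paper leaves implicit, but the reasoning is the same.
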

	\begin{myproof}
	By definition of a good index $i$, the endpoints of $e_i$ already do not belong to $A$. So it remains to show that $e_i \not\in H$. If $w_i \in D, w_{i+1} \in D$, then $e_i$ connects two distinct components of $H[D]$ and so $e_i$ cannot belong to $H$. If $(w_i \in D, w_{i+1} \in C)$ or $(w_i \in C, w_{i+1} \in D)$, then $e_i$ connects a vertex in $D$ to a vertex in $C$. By the construction of Gallai-Edmonds decomposition, $H$ has no edges between $D$ and $C$, implying that $e_i \not\in H$.
	\end{myproof}
	
	If two distinct indices $i$ and $i'$ are good, then by \Cref{cl:clllc39-32} $e_i$ and $e_{i'}$ are suitable. Furthermore, $e_i$ and $e_{i'}$ are vertex disjoint since otherwise one of them, say $e_i$, must belong to $M$ since $P$ alternates between $M$ and $M^\star$. Since $M \subseteq H$, this implies $e_i \in H$ which contradicts $e_i$ being suitable. Hence, if we have two distinct good indices, \Cref{lem:Pstructure}-(T2) is satisfied and we are done. Let us therefore assume for the rest of the proof that at most one index $i$ is good.
	
	If $W$ has size $k=2$, then since $w_1, w_2 \in D$, \Cref{cl:clllc39-32} implies $e_1$ is suitable. Additionally, $e_1$ connects two connected components of $H[D]$ whose special vertices are unmatched in $M$ and belong to $P$ (the two endpoints of $P$). Therefore, $e_1$ satisfies \Cref{lem:Pstructure}-(T1). So let us also assume for the rest of the proof that $k > 2$.
	
	With the two assumptions above that $k > 2$ and there is at most one good index $i$, we get that $w_2 \in A$ or $w_{k-1} \in A$ as otherwise both $1$ and $k-1$ would be good indices. We assume w.l.o.g. that $w_2 \in A$ noting that the other case is equivalent if one relabels $W$ in the reverse direction. 
	
	Recall from \Cref{obs:special} that every component in $H[D]$ has exactly one special vertex where a special vertex is one that is either unmatched in $M$ or matched to a vertex in $A$. Since the special vertex in the component $w_1$ is unmatched by $M$ (as it starts the augmenting path $P$) there is no edge in $M$ from $w_1$ to $w_2 \in A$. This means that $e_1 \in M^\star$. Since $P$ alternates between $M^\star$ and $M$ edges, the next edge $e_2$ must belong to $M$. By \Cref{prop:GE}-(2),(3) every vertex in $A$ is matched to a distinct connected component of $H[D]$ in $M$. As such, $w_3 \in D$ and the special vertex of component $w_3$ belongs to $P$ and is matched in $M$ to $A$. Since all the vertices of the component $w_3$ except the special vertex are matched together by \Cref{prop:GE}-(3), we get that edge $e_3$ which leaves the component $w_3$ cannot belong to $M$ and so belongs to $M^\star$. Now if index $3$ is not good, we must have $w_4 \in A$ by definition of good indices. Continuing this argument and denoting by $\ell$ the first index that is either good or $\ell = k$, we get that $\ell$ must be odd and
	\begin{flalign}
	\label{eq:lrcc-390019}&w_1 \in D, w_2 \in A, w_3 \in D, w_4 \in A,\ldots, w_{\ell-1} \in A, w_\ell \in D, \\
	\nonumber &e_1 \in M^\star, e_2 \in M, e_3 \in M^\star, \ldots, e_{\ell-1} \in M, \text{ and}\\
	\label{eq:hcrllrc-921308}&\text{the special vertices of $w_1, w_3, \ldots, w_\ell$ belong to $P$.}
	\end{flalign}
	It cannot hold that $\ell =k$ since the special vertex of $w_\ell$ is matched in $M$ to $w_{\ell-1} \in A$ by (\ref{eq:lrcc-390019}) whereas the special vertex of $w_k$ must be unmatched by $M$ since $P$ is an augmenting path. Hence, $\ell < k$ and $\ell$ must be our unique good index. We also claim that 
	\begin{flalign}
	\label{eq:rgfb-08873129}&w_k \in D, w_{k-1} \in A, w_{k-2} \in D, w_{k-3} \in A,\ldots, w_{\ell+1} \in D, \\
	\nonumber &e_{k-1} \in M^\star, e_{k-2} \in M, e_{k-3} \in M^\star, \ldots, e_{\ell} \in M^\star, \text{ and}\\
	\label{eq:hbggr-9218379}&\text{the special vertices of $w_k, w_{k-2}, w_{k-4}, \ldots, w_{\ell+1}$ belong to $P$.}
	\end{flalign}
	This trivially holds if $\ell = k-1$. If $\ell < k-1$, then $k-1$ is not a good index and so we can apply the same argument above now starting from $w_k$ and traversing $W$ in the opposite direction.
	
	We prove $e_\ell$ satisfies \Cref{lem:Pstructure}-(T1). First, since $\ell$ is good, from \Cref{cl:clllc39-32} we get that $e_\ell$ is suitable. We also have $w_\ell \in D$ by (\ref{eq:lrcc-390019}) and $w_{\ell+1} \in D$ by (\ref{eq:rgfb-08873129}), and the special vertices of $w_{\ell}$ and $w_{\ell+1}$ belong to $P$ by (\ref{eq:hcrllrc-921308}) and (\ref{eq:hbggr-9218379}). Hence, $e_\ell$ satisfies \Cref{lem:Pstructure}-(T1) concluding the proof.
\end{proof}

We use $\mc{P}_1 \subseteq \mc{P}$ to denote the augmenting paths satisfying \Cref{lem:Pstructure}-(T1) and define $\mc{P}_2 = \mc{P} \setminus \mc{P}_1$. Note that for any $P \in \mc{P}_2$ \Cref{lem:Pstructure}-(T2) must hold.

Next, we construct an auxiliary bipartite graph $B=(W, Z, E_B)$ that we use to lower bound $\mu(H)$. The vertex parts $W$ and $Z$ of $B$ are both subsets of $V$ but, importantly, a vertex in $V$ may belong to both $W$ and $Z$ --- hence, $B$ is not simply a bipartite subgraph of $H$ or $G$.

\begin{highlighttechnical}
\textbf{Construction of $B = (W, Z, E_B)$:}
\begin{itemize}[leftmargin=20pt,itemsep=0pt]
	\item Part $W$ is composed of two disjoint subsets $W_1$ and $W_2$ with $|W_1| = 2|\mc{P}_1|$ and $|W_2| = 4|\mc{P}_2|$. Particularly, $W_1$ for every $P \in \mc{P}_1$, includes the two endpoints of a suitable edge $e$ of $P$ that satisfies \Cref{lem:Pstructure}-(T1). Moreover, $W_2$ for every $P \in \mc{P}_2$, includes the 4 endpoints of the two vertex-disjoint suitable edges in $P$ satisfying \Cref{lem:Pstructure}-(T2).
	\item Part $Z$ includes the vertices in $V \setminus S$. We use $Z_A$ to denote $Z \cap A$ which equals $A$.
	\item For any $w_2 \in W_2$ and any $z \in Z$ such that $(w_2, z) \in H$, we add the edge $(w_2, z)$ to $E_B$.
	\item For any $w_1 \in W_1$ and any $z \in Z_A$ such that $(w_1, z) \in H$, we add the edge $(w_1, z)$ to $E_B$.
\end{itemize}
\end{highlighttechnical}

A few definitions are in order. Define $\lambda := \epsilon/10$; the statement of \Cref{thm:main} implies
\begin{equation}\label{eq:lambda}
	0 < \lambda \leq 1/10, \qquad \beta \geq 5/\lambda, \qquad \beta^- \geq (1-\lambda)\beta \geq 4/\lambda.
\end{equation}
We define $\alpha := |W_1|/|W|$ and $\sigma := \frac{2|D \setminus S|}{|W_1|\beta^-}$ for simplifying our bounds. Finally, for any $U \subseteq W$, we use $\bar{d}_B(U)$ to denote the average degree of $U$ in graph $B$. In other words, $\bar{d}_B(U) \cdot |U|$ is the number of edges of $B$ with one endpoint in $U$.

The next useful claim relates the maximum matching size of $H$ to part $Z$ of graph $B$.

\begin{lemma}\label{cl:muH-from-Z}
	$\mu(H) \geq (\frac{1}{2} - \lambda) |Z| + \frac{1}{2}|Z_A| + \lambda |D \setminus S|$.
\end{lemma}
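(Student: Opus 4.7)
The plan is to simply compute $\mu(H) = |M|$ exactly using the three parts of \Cref{prop:GE}, then rewrite the right-hand side in the same basis $\{|C|, |A|, |D\setminus S|\}$ and check that the claimed inequality is immediate (in fact with slack).

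First I would count the edges of $M$ by where they lie. By \Cref{prop:GE}-(1), the vertices in $C$ are perfectly matched among themselves, contributing $|C|/2$ edges. By \Cref{prop:GE}-(2), each vertex of $A$ is matched to a distinct vertex of $D$, contributing $|A|$ edges. Finally, by \Cref{prop:GE}-(3) together with \Cref{obs:special}, inside each connected component $O$ of $H[D]$ exactly one vertex (the special one) is either unmatched or matched outside $O$, and the remaining $|O|-1$ vertices are matched to each other; summing over all components of $H[D]$ — of which there are exactly $|S|$ — this contributes $(|D|-|S|)/2 = |D \setminus S|/2$ internal edges in $H[D]$. Altogether,
\[
    \mu(H) \;=\; \tfrac{1}{2}|C| + |A| + \tfrac{1}{2}|D \setminus S|.
\]

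Next I would expand the right-hand side. Since $Z = V \setminus S = C \cup A \cup (D \setminus S)$ is a disjoint union and $Z_A = A$, we have $|Z| = |C| + |A| + |D \setminus S|$, so
\[
    \bigl(\tfrac{1}{2} - \lambda\bigr)|Z| + \tfrac{1}{2}|Z_A| + \lambda|D\setminus S|
    \;=\; \tfrac{1}{2}|C| + |A| + \tfrac{1}{2}|D \setminus S| \;-\; \lambda\bigl(|C| + |A|\bigr).
\]
Comparing with the expression for $\mu(H)$ obtained above, the desired inequality reduces to $0 \geq -\lambda(|C|+|A|)$, which holds trivially since $\lambda > 0$. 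Hence the claim follows, in fact with an additive slack of $\lambda(|C|+|A|)$.

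There is no real obstacle here: the entire content of the lemma is the exact Gallai–Edmonds formula $\mu(H) = |C|/2 + |A| + |D\setminus S|/2$, and the statement is crafted so that the $\lambda$ term is a free weakening used later in the proof of \Cref{thm:main}. The only minor point to be careful about is that the count $|D \setminus S|/2$ of internal $H[D]$-edges uses that every component of $H[D]$ has odd size (so $|O|-1$ is even), which is exactly the parity statement in \Cref{prop:GE}-(3).
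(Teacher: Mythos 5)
Your proof is correct and takes essentially the same route as the paper: both derive the exact identity $\mu(H) = \tfrac{1}{2}|C| + |A| + \tfrac{1}{2}|D\setminus S| = \tfrac{1}{2}|Z| + \tfrac{1}{2}|Z_A|$ from \Cref{prop:GE} and \Cref{obs:special}, and then note that the $\lambda$-terms only weaken it. The paper packages the count as a vertex weighting $\phi$ (equal to $1$ on $A$, $0$ on $S$, $\tfrac{1}{2}$ elsewhere) and uses $|Z|\geq |D\setminus S|$ for the final step, whereas you count the edges of $M$ directly and exhibit the exact slack $\lambda(|C|+|A|)$ --- a purely presentational difference.
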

\begin{myproof}
	Let us define $\phi(v) : V \to \{0, \sfrac{1}{2}, 1\}$ as follows:  For any vertex $v \in A$ let $\phi(v) = 1$, for any vertex $v \in S$ let $\phi(v) = 0$, and for all other vertices $v$ let $\phi(v) = \sfrac{1}{2}$. 
	
	First, we show that $\sum_{v \in V} \phi(v) = |M| = \mu(H)$. Since any vertex $v$ unmatched by $M$ must be special (by \Cref{prop:GE} and definition of special vertices) and so $\phi(v) = 0$, it suffices to prove that for any edge $(u, v) \in M$, $\phi(u) + \phi(v) = 1$. Indeed, if one of $u$ and $v$ belongs to $A$ the other must be special (by \Cref{prop:GE} and definition of special vertices) and so $\phi(u)  + \phi(v) = 1+ 0 = 1$. Otherwise, neither of $u$ or $v$ is in $A$ or is special, and thus $\phi(u) + \phi(v) = \sfrac{1}{2} + \sfrac{1}{2} = 1$.
	
	Next, observe from definition of $\phi$ that $\sum_{v \in V} \phi(v) = \frac{1}{2}|V \setminus (A \cup S)| + |A| = \frac{1}{2}|V \setminus S| + \frac{1}{2} |A|$. Since by construction $Z = V \setminus S$ and  $Z_A = Z \cap A = A$, we get $\sum_{v \in V} \phi(v) = \frac{1}{2} |Z| + \frac{1}{2}|Z_A|$. Thus,
	$$\mu(H) = \sum_{v \in V} \phi(v) = \frac{1}{2}|Z| + \frac{1}{2}|Z_A| = \Big(\frac{1}{2} - \lambda\Big)|Z| + \lambda |Z| + \frac{1}{2}|Z_A| \geq \Big(\frac{1}{2} - \lambda \Big)|Z| + \frac{1}{2}|Z_A| + \lambda|D \setminus S|,$$ 
	where last inequality follows from $Z = V \setminus S \supseteq D \setminus S$ which implies $|Z| \geq |D \setminus S|$.
\end{myproof}

From \Cref{cl:muH-from-Z}, we get that to lower bound $\mu(H)$ it suffices to lower bound $|Z|$ and $|Z_A|$. Our goal is to use \Cref{lem:partsizeEDCS} for this purpose; so let us upper bound the edge-degrees of $B$ first, and then lower bound the average degrees of $W_1$ and $W_2$ in \Cref{cl:avgdegW2} and \Cref{cl:avgdegW1}.

\begin{observation}\label{obs:edge-degreeB}
	For any edge $e \in E_B$, $\deg_B(e) \leq \beta$.
\end{observation}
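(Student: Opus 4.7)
The plan is to show that $B$ is essentially a ``two-copy'' subgraph of $H$: every edge of $B$ corresponds to an actual edge of $H$, and the degree of every vertex of $B$ is at most its degree (as a vertex of $V$) in $H$. Once this is established, property (P1) of EDCS immediately gives the desired bound.

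First I would note that by construction, every edge $(w, z) \in E_B$ was added only when the underlying pair $(u, v) \in V \times V$ (with $w$ being identified with $u$ and $z$ with $v$) satisfies $(u, v) \in H$. So edges of $B$ are in one-to-one correspondence with a subset of edges of $H$ (possibly with some vertex relabeling, because a vertex of $V$ may appear both as a member of $W$ and as a member of $Z$, but these are treated as distinct copies in the bipartite graph $B$).

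Next I would bound vertex degrees. Take any $w \in W$ identified with $u \in V$. Every edge of $B$ incident to $w$ is of the form $(w, z)$ for some $z \in Z$ identified with some $v \in V$ and with $(u, v) \in H$. Since different neighbors $z$ of $w$ in $B$ correspond to different neighbors $v$ of $u$ in $H$, we get $\deg_B(w) \leq \deg_H(u)$. The same reasoning gives $\deg_B(z) \leq \deg_H(v)$ for any $z \in Z$ identified with $v \in V$.

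Finally, for any edge $e = (w, z) \in E_B$ with $w, z$ identified with $u, v \in V$, we have $(u, v) \in H$, so property (P1) of EDCS yields
\[
\deg_B(e) \;=\; \deg_B(w) + \deg_B(z) \;\leq\; \deg_H(u) + \deg_H(v) \;=\; \deg_H(u, v) \;\leq\; \beta.
\]
There is no genuine obstacle here; the only thing to be mildly careful about is that the same vertex of $V$ may appear on both sides of $B$, but since $B$ is defined as a bipartite graph treating these occurrences as distinct, no double counting occurs and the bound $\deg_B(w) \leq \deg_H(u)$ still holds for each copy.
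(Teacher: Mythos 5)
Your proof is correct and follows essentially the same route as the paper: every edge of $B$ comes from an edge of $H$, each vertex's degree in $B$ is at most its degree in $H$, and property (P1) then gives $\deg_B(e) \leq \deg_H(e) \leq \beta$. The care you take about a vertex of $V$ possibly appearing in both $W$ and $Z$ is warranted and handled correctly.
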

\begin{myproof}
	By construction of  $B$, $\deg_H(w) \geq \deg_B(w)$ for any vertex $w \in W$ and  $\deg_H(z) \geq \deg_B(z)$ for any vertex $z \in Z$. Additionally, if there is an edge $e = (w, z) \in B$, then $(w, z) \in H$ too and so $\deg_H(e) \leq \beta$. As such, $\deg_B(e) = \deg_B(w) + \deg_B(z) \leq \deg_H(w) + \deg_H(z) = \deg_H(e) \leq \beta$.
\end{myproof}

\begin{lemma}\label{cl:avgdegW2}
	$\bar{d}_B(W_2) \geq (1- \lambda)\beta/2$.
\end{lemma}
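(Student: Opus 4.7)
The plan is to lower-bound $\sum_{w\in W_2}\deg_B(w)$ and then divide by $|W_2|=4|\mc{P}_2|$. For each $w\in W_2$ the construction of $B$ gives $\deg_B(w)=\deg_H(w)-|N_H(w)\cap S|$, so I will separately lower-bound $\sum_{w\in W_2}\deg_H(w)$ using property (P2) of the EDCS, and upper-bound the ``loss'' $\sum_{w\in W_2}|N_H(w)\cap S|$ using the Gallai-Edmonds structure of $H$.

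For the first bound, for each $P\in\mc{P}_2$ let $e_1^P=(u_1,v_1)$ and $e_2^P=(u_2,v_2)$ be the two vertex-disjoint suitable edges of $P$ guaranteed by \Cref{lem:Pstructure}-(T2), whose four endpoints lie in $W_2$. Since a suitable edge is in $E\setminus E_H$ by definition, property (P2) yields $\deg_H(u_j)+\deg_H(v_j)\geq\beta^-$ for $j\in\{1,2\}$. Summing over $j$ and over all $P\in\mc{P}_2$ gives $\sum_{w\in W_2}\deg_H(w)\geq 2\beta^-|\mc{P}_2|$.

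For the loss, the key structural observation is that every $w\in W_2$ has at most one $H$-neighbor in $S$. Indeed, since $w$ is an endpoint of a suitable edge we have $w\notin A$, so $w\in C\cup D$. If $w\in C$, then by the Gallai-Edmonds decomposition $H$ contains no edges from $C$ to $D\supseteq S$, hence $|N_H(w)\cap S|=0$. If $w\in D$, the $H$-neighbors of $w$ in $D$ all lie in the same connected component of $H[D]$ as $w$, which by \Cref{obs:special} contains exactly one special vertex; hence at most one $H$-neighbor of $w$ belongs to $S$. Summing, the total loss is at most $|W_2|=4|\mc{P}_2|$.

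Combining the two bounds yields $\sum_{w\in W_2}\deg_B(w)\geq(2\beta^--4)|\mc{P}_2|$ and hence $\bar{d}_B(W_2)\geq(\beta^--2)/2$. The last step is a short calculation using both $\beta^-\geq(1-\lambda)\beta$ and $\beta^-\geq 4/\lambda$ from (\ref{eq:lambda}) to conclude $(1-\lambda)\beta/2$; the role of the constraint $\beta^-\geq 4/\lambda$ is exactly to absorb the small additive loss of $2$ relative to the scale of $\beta^-$. I expect the Gallai-Edmonds loss observation---that $|N_H(w)\cap S|\leq 1$ for every $w\in W_2$---to be the conceptual heart of the proof; the main subtle point will be checking that the final arithmetic is tight enough under the precise parameter constraints, since a direct plug-in of $\beta^-\geq(1-\lambda)\beta$ alone would lose an additive $1$ and force us to invoke the second half of (\ref{eq:lambda}).
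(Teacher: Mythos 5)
Your overall strategy is the paper's: apply property (P2) to the two vertex-disjoint suitable edges of each $P \in \mc{P}_2$ to get $\sum_{w \in W_2} \deg_H(w) \geq 2\beta^-|\mc{P}_2| = \frac{\beta^-}{2}|W_2|$, then pass from $\deg_H$ to $\deg_B$. The divergence is in that last passage. The paper asserts $\deg_B(w_2) = \deg_H(w_2)$ for every $w_2 \in W_2$ (i.e., no edges from $W_2$ into $S$ are lost) and immediately concludes $\bar{d}_B(W_2) \geq \beta^-/2 \geq (1-\lambda)\beta/2$. You instead allow each $w_2$ to lose one $H$-neighbor to $S$; your Gallai--Edmonds argument for $|N_H(w)\cap S|\leq 1$ via \Cref{obs:special} is correct as far as it goes, but it costs you an additive $|W_2|$ and leaves you with only $\bar{d}_B(W_2) \geq (\beta^- - 2)/2$.

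The genuine gap is the deferred ``short calculation.'' The inequality $(\beta^- -2)/2 \geq (1-\lambda)\beta/2$ is equivalent to $\beta^- \geq (1-\lambda)\beta + 2$, and no combination of the constraints in (\ref{eq:lambda}) gives this: take $\lambda = 1/10$, $\beta = 50$, $\beta^- = 45$, which satisfies $\beta \geq 5/\lambda$, $\beta^- = (1-\lambda)\beta$, and $\beta^- \geq 4/\lambda$, yet $(\beta^- - 2)/2 = 21.5 < 22.5 = (1-\lambda)\beta/2$. The constraint $\beta^- \geq 4/\lambda$ lets you absorb the additive $2$ only multiplicatively, via $2 \leq \lambda\beta^-/2$, which yields $\bar{d}_B(W_2) \geq (1-\tfrac{\lambda}{2})\beta^-/2 \geq (1-\tfrac{3}{2}\lambda)\beta/2$ --- strictly weaker than the stated bound. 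So as written you have not proved the lemma. To repair it you must either justify the paper's stronger claim that $W_2$ vertices contribute no $H$-edges into $S$ at all (which is what actually delivers the clean $\beta^-/2$), or accept the weaker constant $(1-\tfrac{3}{2}\lambda)\beta/2$ and propagate it through \Cref{cl:lbZ} and the final computation, which changes the lemma statement and the downstream constants.
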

\begin{myproof}
	Take any augmenting path $P \in \mc{P}_2$ and let $e = (u, v)$, $e'=(u', v')$ be the edges in $P$ that satisfy \Cref{lem:Pstructure}-(T2) and are used in defining $W_2$. Since $e, e' \not\in H$ by \Cref{lem:Pstructure}, it must hold that $\deg_H(e) \geq \beta^-$ and $\deg_H(e') \geq \beta^-$ by property (P2) of EDCS. As such, at least $2 \beta^-$ edges must be connected to the four distinct endpoints of $e, e'$ in $H$. Additionally, for any $w_2 \in W_2$ we have $\deg_H(w_2) = \deg_B(w_2)$ by construction of $B$. As such, $\deg_B(u) + \deg_B(v) + \deg_B(u') + \deg_B(v') \geq 2\beta^-$. Hence, overall $2 \beta^- |\mc{P}_2| = \frac{\beta^-}{2} |W_2|$ edges are connected to $W_2$ in $B$, implying that $\bar{d}_B(W_2) \geq \beta^-/2 \geq (1-\lambda)\beta/2$ where the last inequality follows from (\ref{eq:lambda}).
\end{myproof}

\begin{lemma}\label{cl:avgdegW1}
	$\bar{d}_B(W_1) \geq (1-\sigma - \lambda)\beta/2$.
\end{lemma}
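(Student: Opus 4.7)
}
The plan is to compute $\sum_{w\in W_1}\deg_B(w)$ and divide by $|W_1|$, using the identity
$\deg_B(w)=\deg_H(w)-\deg_{H[D]}(w)$
for every $w\in W_1$. This identity is valid because, by the construction of $B$, for $w\in W_1$ only the edges to $Z_A=A$ are added, while $w\in W_1\subseteq D$ has no $H$-neighbor in $C$ (the proof of \Cref{cl:clllc39-32} already uses that $H$ has no edges between $D$ and $C$). Hence every $H$-neighbor of $w$ lies in $A\cup D$, and the $A$-neighbors are exactly what $\deg_B(w)$ counts. Summing over $W_1$,
$\sum_{w\in W_1}\deg_B(w)\;=\;\sum_{w\in W_1}\deg_H(w)\;-\;\sum_{w\in W_1}\deg_{H[D]}(w).$

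For the first sum, I would apply (P2) to each of the $|\mathcal P_1|=|W_1|/2$ suitable edges $(u,v)$ used to define $W_1$: since $(u,v)\notin H$, we have $\deg_H(u)+\deg_H(v)\geq\beta^-$, and summing over all chosen suitable edges gives
$\sum_{w\in W_1}\deg_H(w)\;\geq\;\tfrac{\beta^-}{2}|W_1|\;\geq\;\tfrac{(1-\lambda)\beta}{2}|W_1|,$
where the last inequality uses $\beta^-\geq(1-\lambda)\beta$ from \eqref{eq:lambda}.

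For the second sum I exploit two structural facts: (a) $H$ has no edge between two distinct components' special vertices, so every edge of $H[D]$ has at least one non-special endpoint in $D\setminus S$; and (b) by \Cref{prop:GE}(3) the non-special vertices of each component of $H[D]$ are $M$-matched among themselves, so there are $|D\setminus S|/2$ $M$-edges inside $D$. Applying (P1) to each such $M$-edge yields $\sum_{d\in D\setminus S}\deg_H(d)\leq\beta|D\setminus S|/2$. Combining this with a careful double-count that charges every $H[D]$-edge incident to $W_1$ to its non-special endpoint, and using that (P1) forces any $H[D]$-edge's two endpoints to share degree-budget $\beta$ while (P2) guarantees the $W_1$-endpoint already spends much of that budget, should give the bound $\sum_{w\in W_1}\deg_{H[D]}(w)\leq\sigma\beta|W_1|/2=\beta|D\setminus S|/\beta^-$. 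Plugging back yields $\sum_{w\in W_1}\deg_B(w)\geq(1-\lambda)\beta|W_1|/2-\sigma\beta|W_1|/2=(1-\sigma-\lambda)\beta|W_1|/2$, i.e.\ $\bar d_B(W_1)\geq(1-\sigma-\lambda)\beta/2$.

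The main obstacle is getting the tight upper bound on $\sum_{w\in W_1}\deg_{H[D]}(w)$. The naive chain $\sum_{w\in W_1}\deg_{H[D]}(w)\leq 2|E(H[D])|\leq\beta|D\setminus S|$ is off by a factor of roughly $\beta$, so the charging has to do more than just bound $|E(H[D])|$: it must also use that every $W_1$-vertex already has high $\deg_H$ (from (P2)), so that each non-special $d\in D\setminus S$ can contribute only about one $W_1$-incident $H[D]$-edge on average rather than the trivial $\deg_H(d)\leq\beta$.
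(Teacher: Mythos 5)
Your setup is right and matches the paper's: for $w\in W_1$ the only $H$-edges missing from $B$ are those to non-$A$ vertices, which (since $H$ has no $D$--$C$ edges) lie inside $w$'s own component of $H[D]$, and (P2) applied to the $|W_1|/2$ suitable edges gives $\sum_{w\in W_1}\deg_H(w)\geq \beta^-|W_1|/2$. But the step you yourself flag as the obstacle --- bounding $\sum_{w\in W_1}\deg_{H[D]}(w)$ by roughly $|D\setminus S|$ rather than $\beta|D\setminus S|$ --- is a genuine gap, and the degree-budget mechanism you propose does not close it. Knowing that each $w\in W_1$ has large $\deg_H(w)+\deg_H(v)$ for its suitable partner $v$, and that (P1) caps $\deg_H(w)+\deg_H(d)$ for each $H[D]$-neighbor $d$, does not prevent a single non-special vertex $d$ from being adjacent to many $W_1$-vertices, nor a single $w_1$ from having many neighbors inside its component; nothing in the arithmetic of (P1)/(P2) limits the \emph{number} of such incidences to one per non-special vertex.

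The missing ingredient is structural, not degree-based: every connected component $O$ of $H[D]$ contains \emph{at most one} vertex of $W_1$ (the paper's \Cref{cl:uniqueW1}). This follows from (T1) of \Cref{lem:Pstructure}: two $W_1$-vertices in the same component $O$ would each force the unique special vertex of $O$ onto their augmenting path, hence they lie on the same path and must be the two endpoints of that path's (T1)-suitable edge --- but (T1) places those endpoints in \emph{distinct} components, a contradiction. Granting this, the count is immediate and needs no (P1) budget at all: the unique $W_1$-vertex of $O$ has at most $|O|-1$ neighbors inside $O$, and summing $|O|-1$ over components gives exactly $|D\setminus S|$ by \Cref{obs:special}. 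This yields the stronger bound $\sum_{w\in W_1}\deg_{H[D]}(w)\leq |D\setminus S|=\sigma|W_1|\beta^-/2$, after which your concluding computation goes through (the paper then bounds $(1-\sigma)\beta^-/2\geq(1-\sigma)(1-\lambda)\beta/2\geq(1-\sigma-\lambda)\beta/2$). Without the uniqueness claim, your charging scheme has no reason to assign only about one edge per non-special vertex, so the proof as proposed does not go through.
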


To prove \Cref{cl:avgdegW1}, let us first prove two auxiliary claims.

\begin{claim}\label{cl:uniqueW1}
		Every connected component $O$ in $H[D]$ has at most one vertex in $W_1$.
\end{claim}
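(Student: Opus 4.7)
The plan is to argue by contradiction: I would suppose some connected component $O$ of $H[D]$ contained two vertices of $W_1$, and trace through which augmenting paths these vertices could originate from. Recall from the construction of $W_1$ that for each $P \in \mc{P}_1$, exactly two vertices are inserted, namely the two endpoints of one suitable edge of $P$ that witnesses \Cref{lem:Pstructure}-(T1). The content of (T1) is doing the heavy lifting: it guarantees these two endpoints sit in two \emph{distinct} components of $H[D]$, and that the special vertices of both those components lie on $P$.

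Two cases need to be ruled out. First, the two offending vertices in $O$ might come from the same augmenting path $P \in \mc{P}_1$. But this contradicts (T1) directly, which forces the two endpoints of the chosen suitable edge to occupy different components of $H[D]$. Second, the vertices might come from two different augmenting paths $P, P' \in \mc{P}_1$. In this case I would apply (T1) to $P$ and to $P'$ separately: the vertex of $W_1 \cap O$ contributed by $P$ forces the special vertex of $O$ to lie on $P$, and symmetrically the vertex contributed by $P'$ forces the special vertex of $O$ to lie on $P'$.

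The final ingredient is that augmenting paths are pairwise vertex-disjoint. Since $M$ and $M^\star$ are matchings, $M \cup M^\star$ decomposes into vertex-disjoint paths and cycles, so distinct $P, P' \in \mc{P} \subseteq$ (components of $M \cup M^\star$) share no vertex. Hence the special vertex of $O$ cannot lie on both, which closes the second case.

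There is no real obstacle here; the claim is a bookkeeping consequence of condition (T1) combined with the disjointness of augmenting paths. The only thing worth spelling out carefully is that the second case really does require the \emph{special} vertex (not just some vertex) of $O$ to appear on both $P$ and $P'$, which is precisely what (T1) provides for each of the two contributions.
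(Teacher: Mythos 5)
Your proposal is correct and follows essentially the same argument as the paper: both use (T1) to force the special vertex of $O$ onto each contributing augmenting path, conclude from the vertex-disjointness of paths in $M \cup M^\star$ that the two vertices must come from the same path, and then derive a contradiction with (T1)'s guarantee that the two endpoints of the chosen suitable edge lie in distinct components. Your explicit two-case split is just a slight reorganization of the paper's single chain of implications.
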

\begin{myproof}
	Suppose toward contradiction that two distinct vertices $w_1, w'_1 \in W_1$ belong to the same connected component $O$ in $H[D]$. By \Cref{lem:Pstructure}-(T1) the augmenting path that includes $w_1$ must include the special vertex of $O$, and the same also holds for $w'_1$. Hence, $w_1$ and $w'_1$ must belong to the same augmenting path $P \in \mc{P}_1$. Hence, $w_1$ and $w'_1$ must be the two endpoints of the suitable edge satisfying \Cref{lem:Pstructure}-(T1) for $P$. But \Cref{lem:Pstructure}-(T1) also guarantees that the two endpoints of this suitable edge belong to different connected components of $H[D]$, a contradiction.
\end{myproof}

\begin{claim}\label{cl:hclrc9-12300}
	$\sum_{w_1 \in W_1} \deg_B(w_1) \geq \sum_{w_1 \in W_1} \deg_H(w_1) - |D \setminus S|.$
\end{claim}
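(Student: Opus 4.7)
The plan is to bound, for each $w_1 \in W_1$, the quantity $\deg_H(w_1) - \deg_B(w_1)$ by counting the $H$-neighbors of $w_1$ that are dropped when passing from $H$ to $B$, and then aggregate these bounds across components of $H[D]$ using \Cref{cl:uniqueW1}. First I would observe that $W_1 \subseteq D$: by \Cref{lem:Pstructure}-(T1), every vertex in $W_1$ is an endpoint of a suitable edge whose two endpoints lie in components of $H[D]$. By the construction of $B$, the only edges of $B$ incident to such a $w_1$ go to $Z_A = A$, so
\[
\deg_H(w_1) - \deg_B(w_1) \;=\; |N_H(w_1) \setminus A|.
\]

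Next I would invoke the standard Gallai-Edmonds fact (already used inside \Cref{cl:clllc39-32}) that $H$ contains no edges between $D$ and $C$. Since $w_1 \in D$, this forces $N_H(w_1) \subseteq A \cup D$, and therefore $N_H(w_1) \setminus A \subseteq D$. Any such neighbor must then lie in the same connected component $O$ of $H[D]$ that contains $w_1$, which gives the per-vertex bound $|N_H(w_1) \setminus A| \leq |O| - 1$.

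Finally, \Cref{cl:uniqueW1} tells us that each component of $H[D]$ contains at most one vertex of $W_1$. Summing the per-component bound $|O|-1$ over all components $O$ of $H[D]$, and using that by \Cref{obs:special} the number of such components equals $|S|$, yields
\[
\sum_{w_1 \in W_1} \bigl(\deg_H(w_1) - \deg_B(w_1)\bigr) \;\leq\; \sum_{O} \bigl(|O|-1\bigr) \;=\; |D| - |S| \;=\; |D \setminus S|,
\]
and rearranging gives the claim. There is no real obstacle here; the key point is simply that \Cref{cl:uniqueW1} is exactly what prevents two distinct vertices of $W_1$ in the same component $O$ from doubly charging the $|O|-1$ intra-component slots available there.
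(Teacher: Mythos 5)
Your proof is correct and takes essentially the same approach as the paper: both arguments use the Gallai--Edmonds fact that $H$ has no edges between $D$ and $C$ to confine the lost neighbors of each $w_1 \in W_1$ to its own component $O$ of $H[D]$, bound their number by $|O|-1$, and combine \Cref{cl:uniqueW1} with \Cref{obs:special} to sum these bounds to $|D\setminus S|$. The only cosmetic difference is that you aggregate per vertex and per component rather than through the paper's explicit edge set $F$, which lets you bypass the paper's separate verification that the dropped neighbors do not themselves lie in $W_1$.
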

\begin{proof}	
	Let $F = \{(w_1, v) \in H \mid w_1 \in W_1, v \not\in A\}$. We claim that $(i)$ $v \not\in W_1$ for any $(w_1, v) \in F$, and $(ii)$ $|F| \leq |D \setminus S|$. Toward this, take an edge $(w_1, v) \in F$. By definition of $W_1$, $w_1$ is an endpoint of an edge satisfying \Cref{lem:Pstructure}-(T1) which asserts $w_1$ belongs to some connected component $O$ of $H[D]$. Given that $(w_1, v) \in H$ and $v \not\in A$, by Gallai-Edmonds it must be that $v \in O$ also. So immediately from \Cref{cl:uniqueW1} we get that $v \not\in W_1$ proving $(i)$. To prove $(ii)$, for any edge $(w_1, v) \in F$ we charge the component $O$ of $H[D]$ where $w_1 \in O$. Observe that $w_1$ has at most $|O|-1$ edges in $O$ and so charges $O$ at most $|O|-1$ times. Indeed, $O$ is in total charged at most $|O|-1$ times since $w_1$ is its only vertex in $W_1$ by \Cref{cl:uniqueW1}. Hence, the total number of charges, which equals $|F|$, is upper bounded by $\sum_O (|O|-1)$ where $O$ ranges over all connected components of $H[D]$. Since each component $O$ of $H[D]$ has exactly $|O|-1$ non-special vertices by \Cref{obs:special}, this indeed equals $|D \setminus S|$ and so $|F| \leq |D \setminus S|$.

	Now take $w_1 \in W_1$ and consider an edge $(w_1, v) \in H$. Note that if $v \in A$ then we also have an edge $(w_1, v)$ in $B$ by construction. If $v \not\in A$, on the other hand, we do not have a corresponding edge to $(w_1, v)$ in $B$. However, in this case $(w_1, v) \in F$ and as discussed above, $v \not\in W_1$ and $|F| \leq |D \setminus S|$. Hence, $\sum_{w_1 \in W_1} \deg_B(w_1) \geq \sum_{w_1 \in W_1} \deg_H(w_1) - |D \setminus S|.$
\end{proof}
	
\begin{proof}[Proof of \Cref{cl:avgdegW1}]
	Take an arbitrary augmenting path $P \in \mc{P}_1$ and let $(u, v) \in P$ be the suitable edge of $P$ used in defining $W_1$.  Since $e = (u, v) \not\in H$ by definition of suitable edges, $\deg_H(e) \geq \beta^-$ by property (P2) of EDCS. Hence, $\sum_{w_1 \in W_1} \deg_H(w_1) \geq \frac{1}{2}|W_1|\beta^-$. Thus by \Cref{cl:hclrc9-12300},
	$$
	\sum_{w_1 \in W_1} \deg_B(w_1) \geq \frac{1}{2}|W_1|\beta^- - |D \setminus S| \stackeq{$\left(\sigma = \frac{2|D \setminus S|}{|W_1|\beta^-}\right)$}{=} \frac{1}{2}|W_1|\beta^- - \frac{1}{2} \sigma |W_1|\beta^- = (1 - \sigma)|W_1|\beta^-/2.
	$$
	Therefore,
	$
	\bar{d}_B(W_1) \geq (1-\sigma)\beta^-/2 \geq (1-\sigma)(1-\lambda)\beta/2 \geq (1-\sigma - \lambda)\beta/2.
	$
\end{proof}

Next, we lower bound $|Z|$ by applying \Cref{lem:partsizeEDCS} on $B$ and using the lower bounds of \Cref{cl:avgdegW1} and \Cref{cl:avgdegW2} on the average degrees of $W_1, W_2$.

\begin{lemma}\label{cl:lbZ}
	$|Z| \geq (1-2\alpha \sigma - 2\lambda)|W|$.
\end{lemma}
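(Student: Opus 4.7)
The plan is to apply \Cref{lem:partsizeEDCS} to the bipartite graph $B$ with $P = W$ and $Q = Z$. To use that lemma we need two ingredients: that every edge of $B$ has edge-degree bounded by $\beta$, which is precisely \Cref{obs:edge-degreeB}, and a lower bound on the average degree $d_W := \bar{d}_B(W)$.

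For the average degree, I would combine \Cref{cl:avgdegW1} and \Cref{cl:avgdegW2} by splitting $W$ into $W_1$ and $W_2$. Using $\alpha = |W_1|/|W|$, one gets
\[
d_W \;=\; \alpha\,\bar{d}_B(W_1) + (1-\alpha)\,\bar{d}_B(W_2) \;\geq\; \alpha\,(1-\sigma-\lambda)\tfrac{\beta}{2} + (1-\alpha)(1-\lambda)\tfrac{\beta}{2} \;=\; \tfrac{\beta}{2}\bigl(1 - \lambda - \alpha\sigma\bigr).
\]
Setting $x := \lambda + \alpha\sigma$, this reads $d_W \geq \tfrac{\beta}{2}(1-x)$.

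Then I would apply \Cref{lem:partsizeEDCS}, which gives $|Z| \geq \frac{d_W}{\beta - d_W}\,|W|$. Since the map $d \mapsto d/(\beta-d)$ is increasing on $[0,\beta)$, substituting the lower bound $d_W \geq \tfrac{\beta}{2}(1-x)$ yields
\[
\frac{d_W}{\beta - d_W} \;\geq\; \frac{\tfrac{\beta}{2}(1-x)}{\beta - \tfrac{\beta}{2}(1-x)} \;=\; \frac{1-x}{1+x}.
\]
Finally, \Cref{fact:1/1+x} gives $\frac{1-x}{1+x} \geq 1 - 2x = 1 - 2\lambda - 2\alpha\sigma$, so $|Z| \geq (1 - 2\alpha\sigma - 2\lambda)|W|$, which is what we wanted.

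I do not anticipate any real obstacle here: the structural work has already been done in \Cref{cl:avgdegW1,cl:avgdegW2,obs:edge-degreeB} and the construction of $B$, so this lemma is essentially a bookkeeping step that feeds those bounds into the generic edge-degree inequality \Cref{lem:partsizeEDCS}. The only thing to be mildly careful about is verifying that $d_W < \beta$ so that $d/(\beta-d)$ is both well-defined and monotone on the relevant range; this is immediate since $x > 0$ implies $d_W < \beta/2 < \beta$.
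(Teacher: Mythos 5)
Your proposal is correct and follows essentially the same route as the paper: lower-bound $\bar{d}_B(W)$ by combining \Cref{cl:avgdegW1} and \Cref{cl:avgdegW2} via the split $W = W_1 \cup W_2$, then feed this into \Cref{lem:partsizeEDCS} (justified by \Cref{obs:edge-degreeB}) and finish with \Cref{fact:1/1+x}. Your extra remark about the monotonicity of $d \mapsto d/(\beta - d)$ is a small point the paper leaves implicit, but the argument is otherwise identical.
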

\begin{myproof}
	We prove this by applying \Cref{lem:partsizeEDCS} on graph $B$. For this, we need to lower bound the average degree of the $W$ part of the graph. We have
	\begin{flalign}
	\nonumber \bar{d}_B(W) &= \frac{\bar{d}_B(W_1)|W_1| + \bar{d}_B(W_2)|W_2|}{|W|} \geq \frac{(1 - \sigma - \lambda) \frac{\beta}{2} \cdot \alpha |W| + (1-\lambda) \frac{\beta}{2} \cdot (1-\alpha)|W|}{|W|}\\
	&= \Big((1 - \sigma - \lambda)\alpha + (1-\lambda)(1-\alpha)\Big)\frac{\beta}{2} = (1 - \sigma \alpha - \lambda) \frac{\beta}{2}.\label{eq:llch-9009812388}
	\end{flalign}
	Now since $B$ has edge-degree $\leq \beta$ by \Cref{obs:edge-degreeB}, we can apply \Cref{lem:partsizeEDCS} on graph $B$ to get
$$
	|Z| \geq \frac{\bar{d}_B(W)}{\beta - \bar{d}_B(W)} |W|
	\stackeq{(\ref{eq:llch-9009812388})}{\geq}
	\frac{(1-\sigma \alpha - \lambda) \beta/2}{\beta - (1 - \sigma \alpha - \lambda)\beta/2} |W|
	= \frac{1 - \sigma \alpha  - \lambda}{1 +  \sigma \alpha + \lambda} |W|
	\stackeq{\Cref{fact:1/1+x}}{\geq}
	(1-2\alpha \sigma - 2\lambda)|W|.\qedhere
$$
\end{myproof}

Now we lower bound $|Z_A|$ again by applying \Cref{lem:partsizeEDCS}, but this time on subgraph $B[W_1 \cup Z_A]$.

\begin{lemma}\label{cl:lbZA}
	$|Z_A| \geq (1-2\sigma - 2\lambda) |W_1|$.
\end{lemma}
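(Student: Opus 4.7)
The plan is to imitate the proof of \Cref{cl:lbZ}, but apply \Cref{lem:partsizeEDCS} to the bipartite subgraph $B[W_1 \cup Z_A]$ rather than to all of $B$. The crucial observation, which is what makes this work, is that by the construction of $B$ every edge of $B$ incident to a vertex of $W_1$ has its other endpoint already in $Z_A$: the construction only inserts edges from $W_1$ to $Z_A$. Consequently, restricting to $B[W_1 \cup Z_A]$ removes no edge touching $W_1$, and so the average degree of $W_1$ in the restricted graph equals $\bar{d}_B(W_1)$, which by \Cref{cl:avgdegW1} is at least $(1-\sigma-\lambda)\beta/2$. Moreover, edge degrees can only drop under taking subgraphs, so by \Cref{obs:edge-degreeB} every edge of $B[W_1 \cup Z_A]$ still has edge-degree at most $\beta$, which is exactly the hypothesis needed to invoke \Cref{lem:partsizeEDCS}.

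With these two facts in hand, I would then apply \Cref{lem:partsizeEDCS} to $B[W_1 \cup Z_A]$ with part $P := W_1$ (of average degree $d_P \geq (1-\sigma-\lambda)\beta/2$) and part $Q := Z_A$. This yields
\[
|Z_A| \;\geq\; \frac{d_P}{\beta - d_P}\,|W_1|.
\]
Since $x \mapsto x/(\beta-x)$ is monotonically increasing on $[0,\beta)$, plugging in the lower bound on $d_P$ and simplifying gives
\[
|Z_A| \;\geq\; \frac{(1-\sigma-\lambda)\beta/2}{\beta - (1-\sigma-\lambda)\beta/2}\,|W_1| \;=\; \frac{1-\sigma-\lambda}{1+\sigma+\lambda}\,|W_1|,
\]
and then applying \Cref{fact:1/1+x} with $x = \sigma + \lambda \geq 0$ produces the claimed bound $|Z_A| \geq (1-2\sigma-2\lambda)|W_1|$. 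This last chain of inequalities is formally identical to the concluding calculation of \Cref{cl:lbZ}, with $\alpha \sigma$ there replaced by $\sigma$ here.

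The only genuinely new step is the first observation above: that restricting $B$ to $W_1 \cup Z_A$ does not cost anything at the $W_1$ side because the construction of $B$ was set up so that $W_1$'s only neighbors live in $Z_A$ anyway. I expect this to be the sole conceptual point; everything after that is routine algebra already carried out in the proof of \Cref{cl:lbZ}. No real obstacle is anticipated, provided the construction of $B$ is read carefully.
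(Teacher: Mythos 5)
Your proof is correct and follows essentially the same route as the paper: the paper likewise applies \Cref{lem:partsizeEDCS} to the bipartite graph between $W_1$ and $Z_A$, using exactly your observation that all $B$-edges out of $W_1$ already land in $Z_A$ so that the average degree of $W_1$ is unchanged, followed by the same algebra via \Cref{fact:1/1+x}. (The paper phrases the host graph as $H[W_1\cup Z_A]$ rather than $B[W_1\cup Z_A]$, but this is only a cosmetic difference.)
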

\begin{myproof}
	We apply \Cref{lem:partsizeEDCS} on the induced subgraph $H[W_1 \cup Z_A]$ of $H$. Observe that since by construction of $B$ all the edges of $W_1$ go to subset $Z_A$ of $Z$, then $\bar{d}_B(W_1)$ is also the average degree of $W_1$ in $H[W_1 \cup Z_A]$. Hence, we get from \Cref{lem:partsizeEDCS} that
	$$
		|Z_A| \geq \frac{\bar{d}_B(W_1)}{\beta - \bar{d}_B(W_1)}|W_1|
		\stackeq{\Cref{cl:avgdegW1}}{\geq}
		\frac{(1-\sigma - \lambda)\beta/2}{\beta - (1-\sigma - \lambda)\beta/2} |W_1|
		= \frac{1 - \sigma - \lambda}{1 + \sigma + \lambda} |W_1|
		\stackeq{\ref{fact:1/1+x}}{\geq} (1-2\sigma-2\lambda)|W_1|.\qedhere
	$$
\end{myproof}

We are now ready to finish the proof of \Cref{thm:main}.

\begin{proof}[Proof of \Cref{thm:main}]
We have
\begin{flalign*}
	\mu(H) &\geq \Big(\frac{1}{2} - \lambda\Big)(|Z| + |Z_A|) + \lambda |D \setminus S| \tag{By \Cref{cl:muH-from-Z}.}\\
	&\geq \Big(\frac{1}{2} - \lambda\Big)\Big((1-2\alpha \sigma - 2\lambda)|W| + (1-2\sigma - 2\lambda)\alpha|W|\Big) + \lambda|D \setminus S| \tag{By \Cref{cl:lbZ} and \Cref{cl:lbZA}.}\\
	&= \Big(\frac{1}{2} - \lambda\Big)\Big((1- \alpha)|W| + (2-4\sigma)\alpha|W| - 2\lambda(1+\alpha)|W|\Big) + \lambda|D \setminus S|\\
	&\geq \Big(\frac{1}{2} - \lambda \Big)\Big(|W_2| + (2-4\sigma)|W_1| - 4\lambda|W|\Big) + \lambda|D \setminus S| \tag{Since $|W_1| = \alpha|W|, |W_2| = (1-\alpha)|W|,$ and $\alpha \leq 1$.}\\
	&\geq \frac{1}{2}|W_2| + (1 - 2 \sigma) |W_1|  - 5 \lambda |W|  + \lambda|D \setminus S|\\
	&= \frac{1}{2}|W_2| + (1 - 2 \sigma) |W_1|  - 5 \lambda |W|  + \frac{1}{2}\lambda \sigma |W_1| \beta^- \tag{Since $\sigma = \frac{2|D \setminus S|}{|W_1| \beta^-}$.}\\
	&= \frac{1}{2}|W_2| + |W_1|  - 5 \lambda |W| \tag{Since $\beta^- \geq 4/\lambda$ by (\ref{eq:lambda}).}
\end{flalign*}
Now recalling from the definition of $W_1$ and $W_2$ that $|W_2| = 4|\mc{P}_2|$ and $|W_1| = 2|\mc{P}_1|$, and noting that $|W| = |W_1| + |W_2| \leq 4 |\mc{P}|$, we get
$$
	\mu(H) \geq \frac{1}{2} \cdot 4|\mc{P}_2| + 2|\mc{P}_1| - 20 \lambda |\mc{P}| = (2 - 20 \lambda) |\mc{P}| \stackeq{\Cref{obs:num-of-augpaths}}{\geq} (2 - 20\lambda)(\mu(G) - \mu(H)).
$$
Moving the terms and replacing $\lambda$ with $\epsilon/10$, we get $(3 - 2\epsilon)\mu(H) \geq (2-2\epsilon)\mu(G)$ and since $\epsilon \leq 1$,
$$
	\mu(H) \geq \frac{2 - 2\epsilon}{3 - 2\epsilon} \mu(G) \geq \frac{2 - 2\epsilon}{3} \mu(G) \geq \Big(\frac{2}{3} - \epsilon \Big)\mu(G).\qedhere
$$
\end{proof}

\bibliographystyle{alpha}
\bibliography{references}
	
\end{document}